\newtheorem{lemma}{Lemma}
\newtheorem{remark}{Remark}
\newtheorem{corollary}{Corollary}
\def\BibTeX{{\rm B\kern-.05em{\sc i\kern-.025em b}\kern-.08em
		T\kern-.1667em\lower.7ex\hbox{E}\kern-.125emX}}
\begin{document}

\title{ Alternative Formulations for the Fluctuating Two-Ray Fading Model 
}

\author{Maryam~Olyaee,~Juan M. Romero-Jerez,~ F. Javier Lopez-Martinez and Andrea J. Goldsmith

\thanks{
This work has been
funded in part by the Spanish Government and the European Fund for
Regional Development FEDER (project TEC2017-87913-R) and by Junta de
Andalucia (project P18-RT-3175).}
\thanks{
M. Olyaee, J. M. Romero-Jerez, and F. J. L\'opez-Mart\'inez are with Communications and Signal Processing Lab, Instituto Universitario de Investigaci\'on en Telecomunicaci\'on (TELMA), Universidad de M\'alaga, CEI Andaluc\'ia TECH.
ETSI Telecomunicaci\'on, Bulevar Louis Pasteur 35, 29010 M\'alaga, Spain. (e-mails: maryam.olyaee@yahoo.com, romero@dte.uma.es, fjlopezm@ic.uma.es).}
\thanks{
A. Goldsmith is with the Electrical and Computer
Engineering Department, Princeton University, Princeton, NJ, 08544 USA,
(e-mail: goldsmith@princeton.edu).}
\thanks{This work has been submitted to the IEEE for publication. Copyright may
be transferred without notice, after which this version may no longer be
accessible}}
	{}
	
\maketitle
\begin{abstract}
We present two alternative formulations for the distribution of the fluctuating two-ray (FTR) fading model, which simplify its 
statistical characterization and subsequent use for performance evaluation. New expressions for the probability density 
function (PDF) and cumulative distribution function of the FTR model are obtained based on the observation that the FTR fading 
distribution is described, for arbitrary $m$, as an underlying Rician Shadowed (RS) distribution with continuously varying 
parameter $K$, while for the special case of $m$ being an integer, the FTR fading model is described in terms of a finite 
number of underlying squared Nakagami-$m$ distributions. It is shown that the chief statistics and any performance metric that 
are computed by averaging over the PDF of the FTR fading model can be expressed in terms of a finite-range integral over the 
corresponding statistic or performance metric for the RS (for arbitrary $m$) or the Nakagami-$m$ (for integer $m$) fading 
models,  which have a simpler analytical characterization than the FTR model and for which many results are available in closed-
form. New expressions for some Laplace-domain statistics of interest are also obtained; these are used to exemplify the 
practical relevance of this new formulation for performance analysis.

\end{abstract}
\begin{IEEEkeywords}
Fluctuating Two-Ray, Rician Shadowed, Nakagami-$m$, Generalized MGF, Incomplete Generalized MGF, Co-Channel Interference (CCI), Multi-Antenna. 
\end{IEEEkeywords}
\section{Introduction}
Stochastic channel propagation models are indispensable for the design, performance evaluation, and comparison of wireless communication systems. 
With developing communication technologies, existing channel models need to be improved 
as the new spectral bands, environments and use cases present propagation features that are not properly captured by 
the well-known classical models.
In this regard, the Fluctuating Two-Ray (FTR) fading model was introduced in \cite{ftr} as a general model to capture the rapid fluctuations of
the radio channel in a wide variety of environments, through only three physically-motivated fading model parameters: $K$, $\Delta$ and $m$.

The FTR fading model includes traditional fading models such as Rayleigh, Rice, Nakagami-$m$ and Hoyt as special cases, as well 
as other widely-used models such as Durgin's two-wave with diffuse power (TWDP) fading model. Moreover, the FTR model provides a better
match to field measurements than most existing stochastic fading models in 
different environments, in particular at 28 GHz \cite{h18}. Consequently, the FTR fading model has been widely used for the 
performance analysis of 
millimeter wave (mmWave) communications  ~\cite{mine1,mine2,PHY,DF,hadi1,hadi3}. Nevertheless, the generality of the model 
permits its
application to different environments, such as maritime communications \cite{maritime}; also, a wide variety of wireless 
scenarios and 
metrics have been been investigated considering the FTR fading model \cite{UAV,hadi2,AP1,PHY,ED,sumF,newR,ftr2021,correct,moment
}.
In \cite{mine1}, maximum ratio combining (MRC) was studied for a multi-antenna system considering the $m$ parameter to be an 
integer.
In \cite{mine2}, the coverage probability, average rate and bit error probability of a single-antenna downlink cellular system 
with inter-cell
interference under FTR fading was analyzed, again for integer $m$.
%
Performance analysis for the sum of squared FTR random variables was presented in \cite{sumF} and the performance 
of the equal gain combining (EGC) technique was studied in \cite{hadi2} for arbitrary $m$ in terms of infinite 
summations. Channel capacity with different power adaption methods under FTR fading was explored, also in terms of  infinite 
summations,  in \cite{AP1}.
In \cite{PHY} and \cite{ftr2021}  the performance of physical layer security in FTR fading channels was analyzed,
 relay network performance for the FTR fading was investigated in \cite{DF,hadi1,hadi3, UAV}; and \cite{ED} studied 
the performance of energy detection in FTR fading.

The analytical results in all the aforementioned works on FTR are given either for the case when the $m$ parameter of the model 
is an integer or in terms of infinite series when arbitrary $m$ is considered. This is due to the fact that, until now, two different 
expressions of the probability density function (PDF)
of the signal-to-noise ratio (SNR) in FTR fading are available in the literature.
The original formulation for the PDF of the SNR of this model was introduced in \cite{ftr}, and an analytical expression based 
on the hypergeometric function $\Phi_2$ \cite{Matlabprog} was given for the case of integer $m$. For the case of arbitrary $m$, 
the PDF in \cite{ftr} can be computed by a numerical inverse Laplace transformation over the moment generating function (MGF), 
which is provided in closed-form. 
Later, an alternative expression for the PDF of the SNR valid for arbitrary $m$ was proposed in \cite{newR,correct} based on 
the use of infinite series, and therefore all the performance results based on this formulation entail evaluating an infinite series, which
must be truncated to be  numerically computed, yielding error bounds, and empirical results are necessary to establish the rate of  
convergence (which needs to be demonstrated) and the tightness of the approximation, which in general will depend on the channel parameters values.

In this paper we introduce two alternative formulations for the PDF of the FTR model (one for arbitrary real $m$ and another 
for integer $m$) that avoid the use of infinite series as well as inverse Laplace transformations. These formulations are based on the 
observation that the FTR fading model can be expressed in terms of some underlying analytically simpler fading models like the Rician 
Shadowed (RS) \cite{Abdi2003} or Nakagami-$m$ models. Both newly proposed formulations provide a more 
complete statistical characterization of the model, thus opening the door to the performance analysis of different wireless 
scenarios and metrics
for the FTR model not found in the literature. 
In particular, the main contributions of this paper are:
\begin{itemize}
	\item 
It is demonstrated that, for arbitrary real $m$, any given metric or statistical function for FTR fading can be readily obtained
 by computing a finite-range integral whose integrand 
is the metric or function available for RS fading.
	\item 
When $m$ is an integer, it is demonstrated that any given metric or statistical function for FTR fading can be obtained
 by computing a finite-range integral whose integrand 
is the metric or function available for Nakagami-$m$ fading.
	\item 
New important statistics relevant to communication theory are obtained by employing these new frameworks: The generalized 
MGF (GMGF), the incomplete MGF (IMGF), for arbitrary $m$; and the incomplete generalized MGF (IGMGF), for integer $m$.
\end{itemize}

As the integrands in the presented formulations are continuous bounded functions (for a 
given average SNR), these finite-range integrals can be  
efficiently computed. Moreover, in several important cases the resulting integral can be solved in closed-form,
as is the case for the GMGF for arbitrary $m$.
Note that both the RS and Nagakami-$m$ fading models have been intensively investigated, particularly the latter. Therefore, 
all the 
available results for these fading models directly yield results for FTR fading in a straightforward manner, thus avoiding 
starting the required statistical analysis 
from scratch.
On the 
other hand, it should be noted that the RS fading model is a special case of the  $\kappa-\mu$ Shadowed fading model
when $\mu=1$ and $\kappa = K$ \cite{Paris2014}, which has been extensively 
investigated in the last few years, so that any performance metric already obtained for the $\kappa-\mu$ Shadowed can also be used
to obtain the corresponding metric for the FTR model. 

The use of finite-range integrals is common in communication theory, two compelling examples being the MGF approach in \cite{simon} for error probability analysis, and the proper-integral forms for the Gaussian Q-function \cite{hoyt18}, the Marcum Q-function \cite{hoyt19}, and the Pawula F-function \cite{hoyt20}. Additional finite-range integral formulations have been used to establish the connection between Hoyt and 
Rayleigh fading in \cite{NewF} and between TWDP and Rician fading models \cite{mgf}.  The integral connection between channel models with an arbitrary number $N$ of specular components plus a diffuse components 
(which includes the FTR model for $N=2$) and the RS fading model was first identified in \cite{Romero19}, but a standard procedure for leveraging known results for RS fading was not explored.

As an example of applying the derived statistical functions, we obtain outage probability expressions in a number of scenarios considering FTR fading, co-channel interference (CCI) and background noise, as well as when assuming interference-limited scenarios. Numerical results are presented to confirm the accuracy of our analytical derivations, showing the effects of different values of the parameters on the system performance.


The rest of this paper is organized as follows: In Section~II, the preliminary definitions are introduced.
Section~III and IV present the connections between the FTR and, respectively, RS and Nakagami-$m$ fading models, based on which the alternative formulations of the FTR model are carried out. 
Based on these connections, an outage probability analysis is performed in Section~V when the desired signal experiences FTR fading
and in the presence of interference.
Finally, Section~VI provides numerical and simulation results followed by the conclusions in Section~VII.

\section{Preliminary Definitions}
In the FTR fading model, the wireless channel consists of two fluctuating dominant waves, referred to as specular components, 
to which other diffusely propagating waves are added. The complex baseband received signal can be expressed as 
\begin{align}
	V = \sqrt{\zeta} V_1 \exp (j \phi_1) + \sqrt{\zeta} V_2 \exp (j \phi_2) +X+jY,\label{Vr}
\end{align}
where $ V_n $ and $\phi_n$, for $n=1,2$, represent, respectively, the average amplitude and
the uniformly distributed random phase of the n-\emph{th} specular component, such that $\phi _n \sim \mathcal{U}[0,2\pi)$. 
The term $X + jY$ is a complex Gaussian random variable, with $X,Y \sim \mathcal{N}(0,\sigma^2)$, representing the diffuse 
received signal component 
due to the combined reception of numerous weak scattered waves.
On the other hand, $\zeta$ is a unit-mean Gamma distributed random variable modulating the specular components, whose PDF is 
given by
\begin{align}
	f_{\zeta}(u) =\frac{m^m u^{m-1}}{\Gamma(m)} e ^{-mu}.
\end{align}
This model is conveniently expressed in terms of the parameters $K$ and $\Delta$, defined as
\begin{align}
	&\Delta =\frac{2V_1 V_2}{V_1^2+V_2^2},\\
	&K= \frac{V_1^2 +V_2^2}{2\sigma^2},
\end{align}
where $K$ denotes the power ratio of the specular components to the diffuse components, and 
$\Delta$ provides a measure of the similarity of the two specular components, ranging from 0 (one specular component is absent)
to 1 (the specular components have the same amplitude).
Additionally, we define the ancillary random variable $\theta \triangleq \phi_1-\phi_2$. Note that, as the phase difference is 
modulo $2\pi$, $\theta$ will be uniformly distributed, and therefore we can write $\theta \sim 
\mathcal{U}[0,2\pi)$.

Throughout this paper, we will characterize the distribution of the instantaneous SNR at the receiver, which will be denoted
by {$\gamma={(E_s/\overline N_0)\left|V\right|^2}$, with $E_s$ and $\overline N_0$ representing, respectively, the energy density per symbol and the power spectral density.

\textbf{Definition 1:}
A random variable $\gamma$ following a FTR distribution with parameters $m$, $K$, $\Delta$ and mean $\overline\gamma$ will be denoted by 
$\gamma \sim \mathcal{FTR}(\overline{\gamma},m,K,\Delta)$, and its PDF will be denoted by $f_{\gamma}^{\rm FTR}(x;\overline{\gamma},m,K,\Delta)$, where the 
parameters may be dropped from the notation when there is no confusion.

\textbf{Definition 2:}
A random variable $\gamma$ following a squared RS distribution with parameters $m$, $K_r$ and $\overline{\gamma}$ will be  denoted  by 
$\gamma \sim \mathcal{RS} (\overline{\gamma},m,K_r)$, and its PDF can be written as
\begin{align}\nonumber \label{fRS}
	&f_\gamma ^{\rm RS}(x;\overline \gamma, m,{K_r}) = {\left( {\frac{m}{{m + {K_r}}}} \right)^m}\frac{{1 + {K_r}}}{{\overline \gamma }} \\
	&\times  \exp \left( { - \frac{{1 + {K_r}}}{{\overline \gamma }}x} \right) {_1}{F_1}\left( {m;1;\frac{{{K_r}\left( {1 + {K_r}} \right)}}{{\overline \gamma \left( {m + {K_r}} \right)}}x} \right),
\end{align}
where $_1 F_1(\cdot)$ is the confluent hypergeometric function of the first kind \cite[eq. (9.210.1)]{Gradsh}, $\overline{\gamma}=(E_s/\overline N_0)2\sigma^2 (1+K_r)$, and $K_r=\frac{\Omega}{2\sigma^2}$,  where $\Omega$ and $2\sigma ^2 $ are the powers of the specular and diffuse components, respectively.

\textbf{Definition 3:}
A random variable $\gamma$ following a squared Nakagami-$\hat{m}$ distribution is expressed as $\gamma \sim \mathcal{K}(\hat{\gamma},\hat{m})$, and 
its PDF, with  integer $\hat{m}$, is given by 
\begin{align}\label{fNAK}
	f_\gamma ^{\mathcal{K}}(x;\hat\gamma,\hat m)=\left( \frac{\hat m}{\hat\gamma} \right)^{\hat m} \frac{x^{\hat m-1}}{(\hat m-1)!}e^{-x \hat m/{\hat\gamma}}.
\end{align}

\section{FTR Formulation as a Continuous Mixture of RS Variates}

In this section, we explore the connection of the FTR model with the RS fading model and present 
a new formulation of the PDF, the cumulative distribution function (CDF) and the moments for the FTR fading 
channel which are valid for arbitrary real $m$. We also 
present results for relevant Laplace-domain statistics that have not been previously investigated for FTR, such as the GMGF and IMGF.

\subsection{PDF and CDF of the FTR fading}
Expressions for the PDF and CDF of the FTR model as a continuous mixture of RS variates are now derived. As a direct implication of this connection, we will show that
any performance metric for FTR fading can be readily obtained from its counterpart for RS fading. 
\begin{lemma}\label{lemRSFTR}
Let  $\gamma$  be a random variable such that $\gamma \sim \mathcal{FTR}(\overline{\gamma};m,K,\Delta)$, then $\gamma$ is a 
continuous mixture of squared RS variates, whose PDF is given as
	\begin{align}
		f_\gamma ^{\rm FTR}(x;\overline{\gamma},&m,K,\Delta ) \nonumber \\& 
		= \frac{1}{\pi}\int_0^{\pi } {f_{\gamma | \theta}^{\rm RS}(x;\overline{\gamma},m,K\left( {1 + \Delta \cos (\theta )} \right)d\theta }, \label{RS2FTR}
		\end{align}
\end{lemma}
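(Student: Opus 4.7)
The plan is to condition on the phase difference $\theta \triangleq \phi_1 - \phi_2$ and exhibit the resulting conditional law of $\gamma$ as squared Rician Shadowed. First I would factor a common phase out of the specular sum in \eqref{Vr}, rewriting the received signal as
\begin{equation*}
V = \sqrt{\zeta}\,e^{j\phi_1}\bigl(V_1 + V_2 e^{-j\theta}\bigr) + X + jY.
\end{equation*}
Because $\phi_1$ is independent and uniform, $|V|^2$ is invariant in distribution to the overall phase $e^{j\phi_1}$, and therefore the distribution of $\gamma$ given $\theta$ depends on the specular sum only through its modulus $|V_1 + V_2 e^{-j\theta}|^2 = V_1^2 + V_2^2 + 2V_1V_2\cos\theta$.

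Conditioned on $\theta$, the signal is thus a single specular component of power $V_1^2+V_2^2+2V_1V_2\cos\theta$, amplitude-modulated by the unit-mean Gamma variate $\zeta$, plus the unchanged complex-Gaussian diffuse term with per-quadrature variance $\sigma^2$. This is exactly the Rician Shadowed model of Definition 2, with shadowing parameter $m$ preserved and specular-to-diffuse ratio
\begin{equation*}
K_r(\theta) = \frac{V_1^2 + V_2^2 + 2V_1V_2\cos\theta}{2\sigma^2} = K\bigl(1 + \Delta\cos\theta\bigr),
\end{equation*}
where the last equality uses the definitions of $K$ and $\Delta$. Hence $(\gamma \mid \theta)$ is squared Rician Shadowed with shadowing $m$ and $K$-factor $K(1+\Delta\cos\theta)$, with conditional mean inherited from the fixed diffuse power.

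To close, I would apply the law of total probability. Since $\phi_1$ and $\phi_2$ are independent uniforms, $\theta$ is uniform on $[0,2\pi)$, giving $f_\gamma^{\rm FTR}(x) = (2\pi)^{-1}\int_0^{2\pi} f_{\gamma\mid\theta}^{\rm RS}(x)\,d\theta$. Because the integrand depends on $\theta$ only through $\cos\theta$, which is symmetric about $\theta=\pi$, the range folds to $[0,\pi]$ with a compensating factor of two, yielding the claimed identity. The step I would watch most carefully is the bookkeeping of the conditional Rician Shadowed mean: since it is the diffuse power $2\sigma^2$ and not the mean SNR that is held fixed across $\theta$, the conditional mean actually depends on $\theta$ as $\overline{\gamma}\bigl(1+K(1+\Delta\cos\theta)\bigr)/(1+K)$, which must be substituted into the Rician Shadowed PDF from Definition 2. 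I would cross-check the final formula by taking Laplace transforms on both sides, using the closed-form Rician Shadowed MGF, and verifying that the $\theta$-integral collapses to the known FTR MGF of \cite{ftr}; this consistency check is the safeguard against any small factor error in the parameter translation.
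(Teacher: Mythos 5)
Your proof is correct, but it takes a genuinely different route from the paper's. The paper never manipulates the signal model directly: it writes the FTR density as $E_\zeta$ of a conditional TWDP density, invokes the known TWDP--Rice integral connection of \cite{mgf} to introduce the $\theta$-average, swaps $E_\zeta$ and $E_\theta$, recognizes $E_\zeta$ of the conditional Rice density as the RS density, and finally folds $[0,2\pi)$ to $[0,\pi)$ by the symmetry of the cosine. You instead re-derive the phase-collapsing step from first principles: conditioning on $\theta$, factoring out $e^{j\phi_1}$, and using the phase invariance of $|V|^2$ (equivalently, the circular symmetry of the diffuse Gaussian term) to reduce the two specular components to a single one of power $V_1^2+V_2^2+2V_1V_2\cos\theta$, which together with the Gamma modulation by $\zeta$ is exactly the RS model of Definition~2 with $K_r(\theta)=K(1+\Delta\cos\theta)$. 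Your version is more self-contained --- it essentially reproves the TWDP--Rice result of \cite{mgf} inside the argument rather than citing it --- while the paper's version is shorter and makes the chain FTR $\to$ TWDP $\to$ Rice $\to$ RS explicit, which it then reuses for later statistics. Your bookkeeping of the conditional mean, noting that it is the diffuse power $2\sigma^2$ (equivalently the ratio $\overline\gamma/(1+K)$) and not $\overline\gamma$ itself that is held fixed across $\theta$, is precisely the content of the paper's Remark~1 and is the one place where a factor error would otherwise creep in. The only point to tighten: if you justify the phase invariance via the uniformity of $\phi_1$, you should note that $\phi_1$ remains uniform conditionally on $\theta$ (which holds because $\phi_1$ and $\theta=\phi_1-\phi_2 \bmod 2\pi$ are independent); alternatively, the circular symmetry of $X+jY$ makes that observation unnecessary.
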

with $\gamma | \theta \sim \mathcal{RS} (\overline{\gamma},m,K\left( {1 + \Delta \cos (\theta )} \right))$.
\begin{proof}
For a particular realization of the random variable $\zeta$, the channel model defined in \eqref{Vr} corresponds to  
the TWDP model. Therefore, the PDF of the SNR for the FTR model can be written as
\begin{align}
	f_\gamma ^{\rm FTR}(x;m,K,\Delta ) = 
	{{E}_\zeta }\left[ {f_{\gamma | \zeta} ^{\rm TWDP}(x;\zeta K,\Delta )} \right], \label{twdp1}
\end{align}
where $E_X[\cdot]$ denotes the expectation operator over the random variable $X$ and $f^{TWDP}$ denotes the PDF of the TWDP model. 

Additionally, when there is only one specular component in \eqref{Vr} the FTR model collapses to the RS fading model, from which the Rice fading model is
obtained for a particular realization of $\zeta$. Thus, denoting the PDF of the Rice model as $f^{Rice}$, we can write
\begin{align}
	f_\gamma ^{\rm RS}(x;m,K_r ) = 
	{{E}_\zeta }\left[ {f_{\gamma | \zeta} ^{\rm Rice}(x;\zeta K_r )} \right]. \label{Rice-RS}
\end{align}

On the other hand, it was shown in \cite{mgf} that the TWDP and the Rice fading models are related by
	\begin{align}
	f_\gamma ^{\rm TWDP}(x;\hat {K},\Delta ) = {E_\theta }\left[ {f_{\gamma | \theta } ^{\rm Rice}(x;\hat {K}\left( {1 + \Delta \cos (\theta )} \right)} \right].\label{twdp2}
\end{align}
Consequently, using \eqref{twdp1} and \eqref{twdp2}, the PDF of the FTR model can be written as
\begin{align}\nonumber
	f_\gamma ^{\rm FTR}(x;m,K,&\Delta ) = {E_\zeta }\left[ {{E_\theta }\left[ {f_{\gamma | \theta, \zeta } ^{\rm Rice}(x;\zeta K\left( {1 + \Delta \cos (\theta )} \right)} \right]} \right]  \\
& = {E_\theta }\left[ {{E_\zeta }\left[ {f_{\gamma | \theta, \zeta } ^{\rm Rice}(x;\zeta K\left( {1 + \Delta \cos (\theta )} \right)} \right]} \right],
\end{align}
and taking into account \eqref{Rice-RS} we have
\begin{align}
		f_\gamma ^{\rm FTR}(x;&m,K,\Delta )
 = {E_\theta }\left[ {f_{\gamma | \theta} ^{\rm RS}(x;m,K\left( {1 + \Delta \cos (\theta )} \right)} \right] \nonumber
  \\&
  = \frac{1}{2\pi}\int_0^{2\pi } {f_{\gamma | \theta} ^{\rm RS}(x;m,K\left( {1 + \Delta \cos (\theta )} \right)d\theta }. 
\end{align}
Considering the symmetry of the cosine function around $\pi$, the integration with respect to $\theta$ can be peformed in $[0,\pi)$, 
thus obtaining \eqref{RS2FTR}.
 \end{proof}
\begin{remark} \label{rm1}
It must be noted that the factor $\frac{{\overline \gamma }}{{1 + K}}$ remains invariant in the transformation defined in 
\eqref{RS2FTR}, and this is also
true for all the transformations defined in the proof of Lemma \ref{lemRSFTR} in all subsequent expressions. This is 
due to the fact that
parameters $K$ and $	\overline \gamma $ are related by the expression
$	\overline \gamma  = \left( {{E_s}/{\overline N_0}} \right)\left( {V_1^2 + V_2^2 + 2{\sigma ^2}} \right) = \left( {{E_s}/{\overline N_0}} \right)2{
\sigma ^2}\left( {1 + K} \right)$, 
therefore, if $K$ varies  as a function of $\zeta$ and/or $\theta$, then $	\overline \gamma $ also varies as 
$
	\overline \gamma (\zeta ,\theta ) =\left( {{E_s}/{\overline N_0}} \right)2{\sigma ^2}\left( {1 + K (\zeta ,\theta )} \right),
$
yielding,
\begin{align}
\frac{{\overline \gamma }}{{1 + K}} = \left( {{E_s}/{\overline N_0}} \right)2{\sigma ^2} 
= {\rm{ }}\frac{{\overline \gamma (\zeta ,\theta )}}{{1 + K(\zeta ,\theta )}}.\label{eqg}
	\end{align}
\end{remark}

\begin{corollary}
 Let $\gamma \sim \mathcal{FTR}(\overline{\gamma},m,K,\Delta)$, then, its PDF can be computed as
\begin{align}
	\nonumber
	&	 f_\gamma ^{\rm FTR}(x;\overline{\gamma},m,K,\Delta )
	\\ \nonumber& =\frac{1}{\pi} \int_0^{\pi } {{\left( {\frac{m}{{m + K\left( {1 + \Delta \cos (\theta )} \right)}}} \right)}^m} \exp \left( { - \frac{{1 + K}}{{\overline \gamma }}x} \right)  \nonumber\\ &
	\times \frac{{1 + K}}{{\overline \gamma }}	{_1}{F_1}\left( {m;1;\frac{{\left( {1 + K} \right)K\left( {1 + \Delta \cos (\theta )} \right)}}{{\overline \gamma m + \overline \gamma K\left( {1 + \Delta \cos (\theta )} \right)}}x} \right)d\theta. \label{fFTR}
\end{align}
\end{corollary}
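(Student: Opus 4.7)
The plan is to obtain the explicit PDF expression by direct substitution, using Lemma \ref{lemRSFTR} together with the invariance property noted in Remark \ref{rm1}. First, I would recall that Lemma \ref{lemRSFTR} already writes the FTR PDF as a finite-range integral over $\theta\in[0,\pi)$ of the conditional RS PDF $f_{\gamma|\theta}^{\rm RS}$ with parameters $m$ and effective Rician factor $K_r(\theta)=K(1+\Delta\cos\theta)$. So everything reduces to inserting the closed form of the RS density given in Definition 2, equation \eqref{fRS}, into the integrand and simplifying.

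Second, I would carefully track the $\overline\gamma$ parameter, since when $K$ is replaced by $K(1+\Delta\cos\theta)$ the associated mean SNR formally becomes $\overline\gamma(\zeta,\theta)$ rather than the original $\overline\gamma$. This is precisely the subtlety addressed by Remark \ref{rm1}: the ratio $\overline\gamma/(1+K)$ is invariant under the transformation, i.e.\
\begin{equation}
\frac{\overline\gamma(\zeta,\theta)}{1+K(\zeta,\theta)}=\frac{\overline\gamma}{1+K}=(E_s/\overline N_0)\,2\sigma^2.
\end{equation}
Applying this identity, the prefactor and the exponential argument in \eqref{fRS} with $K_r=K(1+\Delta\cos\theta)$ collapse to
\begin{equation}
\frac{1+K_r}{\overline\gamma(\zeta,\theta)}=\frac{1+K}{\overline\gamma},
\end{equation}
so the exponential term in \eqref{fFTR} is $\theta$-independent and can in principle be pulled outside the integral.

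Third, I would handle the argument of $_1F_1$ in the same way: writing
\begin{equation}
\frac{K_r(1+K_r)}{\overline\gamma(\zeta,\theta)(m+K_r)}=\frac{K(1+\Delta\cos\theta)}{m+K(1+\Delta\cos\theta)}\cdot\frac{1+K}{\overline\gamma},
\end{equation}
which, after clearing common factors, reproduces the $_1F_1$ argument displayed in \eqref{fFTR}. The $\theta$-dependent prefactor $\bigl(m/(m+K(1+\Delta\cos\theta))\bigr)^m$ comes directly from \eqref{fRS} with no further manipulation.

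I do not anticipate a genuine obstacle here: the result is a mechanical corollary of Lemma \ref{lemRSFTR} plus Remark \ref{rm1}. The only mildly delicate point is bookkeeping on the $\overline\gamma$ parameter, making sure to apply the invariance \eqref{eqg} consistently so that the final expression is written in terms of the original physical mean SNR $\overline\gamma$ and the original parameters $(m,K,\Delta)$, rather than in terms of auxiliary quantities depending on $\theta$.
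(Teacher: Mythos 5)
Your proposal is correct and follows exactly the paper's own route: the paper proves this corollary by plugging the RS PDF \eqref{fRS} into the mixture representation \eqref{RS2FTR} of Lemma \ref{lemRSFTR} and invoking the invariance \eqref{eqg} from Remark \ref{rm1}. Your algebraic bookkeeping of the $\overline\gamma$ parameter and of the $_1F_1$ argument is accurate and simply spells out what the paper leaves implicit.
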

\begin{proof}
	This expression is obtained by plugging \eqref{fRS} into (\ref{RS2FTR}) and considering \eqref{eqg}.
\end{proof}

It was demonstrated in \cite{ftr} that the FTR fading model collapses to the Hoyt fading model for $m=1$. In this regard,
it can be easily shown that the integral connection
between the Hoyt and Rayleigh models presented in \cite{NewF} is actually a particular case of \eqref{fFTR} when $m=1$.
\begin{remark} \label{rm2}
Lemma \ref{lemRSFTR} implies that, conditioning on $\theta$, the FTR distribution is actually a RS distribution, and we can write
the conditional PDF as
\begin{align}\nonumber
	&	f_{\gamma | \theta }^{\rm FTR}(x;m,K,\Delta ) = f_{\gamma | \theta}^{\rm RS}(x;m,K(1+\Delta cos (\theta)) ) =
	\\ \nonumber &
	{\left( {\frac{m}{{m + K\left( {1 + \Delta \cos (\theta )} \right)}}} \right)^m}\frac{{1 + K}}{{\overline \gamma }}
	\exp \left( { - \frac{{1 + K}}{{\overline \gamma }}x} \right)
	\\& \times  \;{_1}{F_1}\left( {m;1;\frac{{K\left( {1 + K} \right)\left( {1 + \Delta \cos (\theta )} \right)}}{{\overline \gamma m +
 K\overline \gamma \left( {1 + \Delta \cos (\theta )} \right)}}x} \right)\label{fFTRc},
\end{align}
which will be used in subsequent derivations.
\end{remark}

\begin{lemma} \label{lemcdf}
	 Let $\gamma \sim \mathcal{FTR}(\overline{\gamma},m,K,\Delta)$, then, its CDF can be calculated as
\begin{align}\nonumber
	&	F_{\gamma}^{\rm FTR}(x;m,K,\Delta ) \\& \nonumber
	=\frac{1}{\pi} \int_0^{\pi} \frac{{1 + K}}{{\overline \gamma }}x{\left( {\frac{m}{{m + K\left( {1 + \Delta \cos (\theta )} \right)}}} \right)^m} \times \\
	&   {\Phi _2}\left( {1 - m, m;2; - \frac{{1 + K}}{{\overline \gamma }}x; \frac{{-\left( {1 + K} \right)mx}}{{\overline \gamma (m + K \left( {1 + \Delta \cos (\theta )} \right))}}} \right) d\theta,
	\label{CDF}
\end{align}
where $\Phi_2$ is the bivariate confluent hypergeometric function defined in \cite[p. 34, eq. (8)]{phi2}.
\end{lemma}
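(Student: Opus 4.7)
The plan is to derive the CDF by integrating the continuous-mixture representation of the PDF from Lemma~\ref{lemRSFTR} and then establishing a closed-form expression for the CDF of the squared RS distribution. Starting from $F_\gamma^{\rm FTR}(x)=\int_0^x f_\gamma^{\rm FTR}(t)\,dt$ and inserting \eqref{RS2FTR}, I would interchange the $t$ and $\theta$ integrations via Fubini's theorem (justified by the non-negativity and continuity of the integrand on $[0,x]\times[0,\pi]$), obtaining
\begin{equation*}
F_\gamma^{\rm FTR}(x;\overline\gamma,m,K,\Delta) = \frac{1}{\pi}\int_0^\pi F_{\gamma|\theta}^{\rm RS}\bigl(x;\overline\gamma,m,K(1+\Delta\cos\theta)\bigr)\,d\theta.
\end{equation*}
This reduces the problem to deriving the squared RS CDF.

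Next, I would integrate \eqref{fRS} from $0$ to $x$ to obtain the targeted closed-form expression
\begin{equation*}
F^{\rm RS}(x;\overline\gamma,m,K_r)=\tfrac{(1+K_r)x}{\overline\gamma}\Bigl(\tfrac{m}{m+K_r}\Bigr)^{\!m}\!\Phi_2\!\Bigl(1-m,m;2;-\tfrac{(1+K_r)x}{\overline\gamma},-\tfrac{m(1+K_r)x}{\overline\gamma(m+K_r)}\Bigr).
\end{equation*}
Setting $a=(1+K_r)/\overline\gamma$ and $c=aK_r/(m+K_r)$, the core task is evaluating $\int_0^x e^{-at}\,{}_1F_1(m;1;ct)\,dt$. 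I would expand both factors in their Maclaurin series, integrate $\int_0^x t^{j+k}\,dt$ term-by-term, and rearrange the resulting double sum to match the Humbert series
\begin{equation*}
\Phi_2(\beta_1,\beta_2;\gamma;z_1,z_2)=\sum_{j,k\geq 0}\frac{(\beta_1)_j(\beta_2)_k}{(\gamma)_{j+k}}\frac{z_1^j z_2^k}{j!\,k!}.
\end{equation*}
The main obstacle lies here: the Pochhammer $(1-m)_j$ does not arise naturally from the expansion $e^{-at}=\sum_j(-at)^j/j!$, so a Kummer-type transformation ${}_1F_1(a;b;z)=e^z\,{}_1F_1(b-a;b;-z)$ applied to ${}_1F_1(m;1;ct)$ prior to integration, or an equivalent re-indexing identity on the double series, will be required to produce the specific combination of Pochhammer symbols and powers of $-ax$ and $-(a-c)x$ appearing in the target formula.

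Finally, substituting $K_r=K(1+\Delta\cos\theta)$ into the RS CDF and invoking Remark~\ref{rm1} to replace the factor $\overline\gamma/(1+K_r)$ with $\overline\gamma/(1+K)$ in the arguments of $\Phi_2$ and the prefactor recovers the expression in \eqref{CDF} directly.
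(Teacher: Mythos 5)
Your outer structure is exactly the paper's: write $F_\gamma^{\rm FTR}$ as the $\theta$-average of $F_{\gamma|\theta}^{\rm RS}\bigl(x;\overline\gamma,m,K(1+\Delta\cos\theta)\bigr)$ (which is Lemma~\ref{lemRSFTRmet} with $h\equiv 1$, $a=0$, $b=x$), then substitute the closed-form RS CDF and invoke Remark~\ref{rm1} to keep $\overline\gamma/(1+K)$ fixed. The paper, however, does not derive the RS CDF at all: it simply quotes the known closed form from Paris \cite{RSParis}, so the entire proof is two lines of substitution. You instead propose to re-derive that closed form, and that is where your argument has a genuine gap.

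Concretely, the identity you need is $\int_0^x a e^{-at}\,{}_1F_1(m;1;ct)\,dt = ax\,\Phi_2\bigl(1-m,m;2;-ax,-(a-c)x\bigr)$ with $a=(1+K_r)/\overline\gamma$, $c=aK_r/(m+K_r)$. You correctly observe that naive term-by-term integration of the product of Maclaurin series does not produce the pairing of $(1-m)_j$ with $(-ax)^j$ and $(m)_k$ with $(-(a-c)x)^k$, but the fix you suggest does not close the gap: applying Kummer's transformation ${}_1F_1(m;1;ct)=e^{ct}{}_1F_1(1-m;1;-ct)$ turns the integrand into $ae^{-(a-c)t}{}_1F_1(1-m;1;-ct)$, which attaches $(1-m)_k$ to powers of $-cx$ and leaves $-(a-c)x$ in the exponential slot --- still not the arguments appearing in \eqref{CDF}. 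A further nontrivial rearrangement of the double series (or a transformation formula for $\Phi_2$ itself) would be required, and you have not supplied it. The clean routes are either to cite \cite{RSParis} as the paper does, or to match the Laplace transform pair $\mathcal{L}\{t^{\gamma-1}\Phi_2(\beta_1,\beta_2;\gamma;z_1t,z_2t)\}=\Gamma(\gamma)s^{-\gamma}(1-z_1/s)^{-\beta_1}(1-z_2/s)^{-\beta_2}$ against $M_\gamma^{\rm RS}(s)/s$ from \eqref{mgfRS}, which yields the $\Phi_2$ form with $\gamma=2$, $\beta_1=1-m$, $\beta_2=m$ directly. Everything else in your proposal (the Fubini interchange and the final substitution $K_r=K(1+\Delta\cos\theta)$ with Remark~\ref{rm1}) is correct and matches the paper.
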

\begin{proof}
The CDF of the FTR can be written from the CDF of the RS that was obtained in \cite[eq. (8)]{RSParis} as follows:
\begin{align}\nonumber
&	F_\gamma ^{\rm RS}(x ) = \frac{{1 + K}}{{\overline \gamma }}x{\left( {\frac{m}{{m + K}}} \right)^m}\\
	&  \times {\Phi _2}\left( {1 - m,m;2; - \frac{{1 + K}}{{\overline \gamma }}x; - \frac{{\left( {1 + K} \right)mx}}{{\overline \gamma \left( {m + K} \right)}}} \right),
\end{align}
then, it is clear from Remarks 1 and 2 that the conditional CDF of the FTR model can be written as		
\begin{align}
&	F_{\gamma \left| \theta  \right.}^{\rm FTR}(x;m,K,\Delta )= \nonumber
 F_\gamma ^{\rm RS}(x;K\left( {1 + \Delta \cos (\theta )} \right))\\ \nonumber
 & = \frac{{1 + K}}{{\overline \gamma }}x{\left( {\frac{m}{{m + K\left( {1 + \Delta \cos (\theta )} \right)}}} \right)^m}\times\\
	&   {\Phi _2}\left( {1 - m, m;2; - \frac{{1 + K}}{{\overline \gamma }}x; - \frac{{\left( {1 + K} \right)mx}}{{\overline \gamma m + K\overline \gamma \left( {1 + \Delta \cos (\theta )} \right)}}} \right).
	\end{align}
By integrating over the parameter $\theta$, the proof is completed.
\end{proof}

Based on the connection between the FTR and the RS distributions, we show in the following lemma that the 
performance metrics of FTR can be derived from the metrics of the RS fading model. 
\begin{lemma} \label{lemRSFTRmet}
Let $h(\gamma)$ be a performance metric (or function) depending on the instantaneous SNR $\gamma$, and let 
$X^{\rm RS} (\overline \gamma,m,K)$ be the metric (or statistical function) in RS fading obtained by averaging over an interval of the PDF 
of the SNR, i.e.,
\begin{equation} \label{eq:008}
	 X^{\rm RS} (\overline \gamma,m,K)= \int_a^b 
	 h(x) f_\gamma ^{\rm RS}(x;\overline \gamma, m,{K}) 	 dx,
\end{equation}
where $0 \le a\le b < \infty$. Then, the average performance metric in FTR fading channels can be calculated as
	\begin{align}\label{xftr} \nonumber
		X^{\rm FTR}(\overline \gamma,&m,K,\Delta) \\& =\frac{1}{\pi}\int_0^{\pi} X^{\rm RS}(\overline \gamma,m,K(1+\Delta \cos(\theta))) d\theta,
	\end{align}
	where Remark \ref{rm1} must be taken into account.
		\end{lemma}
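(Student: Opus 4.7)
The plan is to proceed directly from Lemma \ref{lemRSFTR}, which already expresses the FTR PDF as a finite mixture of RS PDFs, and then to interchange the order of integration. Starting from the defining integral
\begin{equation*}
X^{\rm FTR}(\overline\gamma,m,K,\Delta) = \int_a^b h(x)\, f_\gamma^{\rm FTR}(x;\overline\gamma,m,K,\Delta)\, dx,
\end{equation*}
I would substitute the mixture representation \eqref{RS2FTR} for $f_\gamma^{\rm FTR}$ so that the $\theta$-integral and the $x$-integral appear side by side.

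The key step is then to invoke Fubini's theorem and swap the order of integration. This is the one point that requires a brief justification: the RS PDF $f_{\gamma|\theta}^{\rm RS}$ is nonnegative, $h$ can be assumed integrable against it over $[a,b]$ (since otherwise $X^{\rm RS}$ would not be finite), and the outer $\theta$-domain $[0,\pi]$ is finite, so the joint integrability condition is easily verified. After swapping, the inner integral in $x$ becomes exactly
\begin{equation*}
\int_a^b h(x)\, f_{\gamma|\theta}^{\rm RS}(x;\overline\gamma,m,K(1+\Delta\cos\theta))\, dx,
\end{equation*}
which, by the definition \eqref{eq:008}, equals $X^{\rm RS}(\overline\gamma,m,K(1+\Delta\cos\theta))$.

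To finish, I would appeal to Remark \ref{rm1} to ensure that the substitution $K \mapsto K(1+\Delta\cos\theta)$ does not break the relation between $K$ and $\overline\gamma$: the ratio $\overline\gamma/(1+K)$, which is the physical quantity $(E_s/\overline N_0)2\sigma^2$, is preserved under the transformation, so that using the same $\overline\gamma$ on both sides of \eqref{xftr} is consistent. Assembling these pieces yields \eqref{xftr}. The main obstacle is really just the careful bookkeeping implied by Remark \ref{rm1}; the analytical content reduces to an application of Fubini to a bounded, nonnegative kernel on a compact $\theta$-interval, which is essentially immediate.
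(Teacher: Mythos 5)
Your proposal is correct and follows essentially the same route as the paper: substitute the mixture representation \eqref{RS2FTR} into the defining integral for $X^{\rm FTR}$ and interchange the order of integration, with the bookkeeping of Remark \ref{rm1}. The only differences are cosmetic — you explicitly justify the Fubini step (the paper omits this), and you call the representation a \emph{finite} mixture when it is in fact a continuous mixture over $\theta$, which does not affect the argument.
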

\begin{proof}
The average metric in a FTR fading channel will be calculated as
\begin{align}
  X^{\rm FTR}(\overline \gamma,m,K,\Delta)=\int_a^b h(x)f^{\rm FTR}_\gamma (x;\overline \gamma, m,K,\Delta) dx. \label{xrs}
\end{align}
The result is obtained by plugging \eqref{RS2FTR} into \eqref{xrs} and changing the order of  integration. 
\end{proof}

Lemma \ref{lemRSFTRmet} has important consequences, as the RS model has a simpler formulation than the FTR model and different
performance metrics for the latter are available in the literature in closed-form. 
 
 
\subsection{MGF of the FTR fading  }
As a consequence of Remark \ref{rm2}, the conditional MGF of the FTR distribution can be written as
	\begin{align}
		&	M_{\gamma|\theta}^{\rm FTR}(s;m,K,\Delta)= M_{\gamma}^{\rm RS}(s;m,K(1+\Delta \cos(\theta )),
	\end{align}
	where $M_{\gamma}^{\rm RS}$ is the MGF of the squared RS distribution, which is given by \cite{Abdi2003}:
	\begin{align}
		M_\gamma ^{\rm RS}(s;m,K,\Delta ) = \frac{{{m^m}{{\left( {1 - \frac{{\overline \gamma }}{{1 + K}}s} \right)}^{m - 1}}}}{{{{\left( {m - \left( {m\frac{{\overline \gamma }}{{1 + K}} + \frac{{\overline \gamma }}{{1 + K}}K} \right)s} \right)}^m}}}\label{mgfRS}.
	\end{align}
	Therefore, noting that $\frac{\overline\gamma}{1+K}$ remains invariant, as justified in Remark \ref{rm1},  the unconditional MGF of the FTR model will be
	\begin{align}\nonumber
		&	M_{\gamma}^{\rm FTR}(s;m,K,\Delta) = \frac{1}{\pi}\int_0^{\pi} M_{\gamma|\theta}^{\rm FTR}(s;K,\Delta) d\theta
		\\&
		=\frac{1}{\pi}\int_0^{\pi} \frac{(1+K)m^m (1+K-\overline{\gamma}s)^{m-1}}{((1+K)m-(m+K(1+\Delta \cos(\theta)))\overline{\gamma}s)^m}d\theta.\label{MFTR}
	\end{align}
	This integral can be solved in closed-form, as shown in the Appendix as integral I1, yielding
		\begin{align}\nonumber
		M_\gamma ^{\rm FTR}(s;m,K,\Delta )& = \frac{{\left( {1 + K} \right){m^m}{{\left( {1 + K - \overline \gamma s} \right)}^{m - 1}}}}{{\sqrt R }}
		\\ &\times
		{P_{m - 1}}\left( {\frac{{\left( {1 + K} \right)m - \left( {m + K} \right)\overline \gamma s}}{{\sqrt R }}} \right),\label{mgfftr}
	\end{align}
	where $P_{m-1}(.)$ is the Legendre function of the first kind of degree $m-1$ and $R$ is defined as
	\begin{align}
		&	R \triangleq  {\left( {\left( {1 + K} \right)m - \left( {m + K} \right)\overline \gamma s} \right)^2} - {\left( {K\overline \gamma \Delta s} \right)^2} \nonumber \\
		&	= \left[ {{{\left( {m + K} \right)}^2} - {{\left( {K\Delta } \right)}^2}} \right]{{\overline \gamma }^2}{s^2} - 2\left( {1 + K} \right)m\left( {m + K} \right)\overline \gamma s \nonumber \\
		& + {\left[ {\left( {1 + K} \right)m} \right]^2}.\label{Rfunc}
	\end{align}
The MGF expression given in \eqref{mgfftr}-\eqref{Rfunc} coincides  with the one previously obtained in \cite[eqs. (8)-(9)]{ftr}, 
which, by virtue of the uniqueness theorem of the MGF, confirms that the PDF given in \eqref{fFTR} is a generalization, for arbitrary $m$, of the PDF given in 
\cite[eq. (15)]{ftr}.

\subsection{Generalized MGF and moments}
The generalized MGF is an important statistical function relevant to wireless communication theory, as it 
naturally appears when analyzing different scenarios such as energy detection probability, outage probability with co-channel 
interference in interference limited scenarios, physical layer 
security analysis \cite{beckmann}, or in the context of composite fading channel modeling \cite{Pablo}.
The GMGF of a random variable $X$ is defined: 
\begin{align}
&{M^{(n)}_X}(s) = \int_0^\infty  {{x^n}\exp \left( {xs} \right)f_X(x)dx}\label{gmgfd}.
\end{align}
Note that in the case of $n\in \mathbb{N}$, the generalized MGF coincides with the $n^{th}$ order derivative 
of the MGF.
\begin{lemma}\label{lemGMGF}
	Let  $\gamma \sim \mathcal{FTR}(\overline{\gamma};m,K,\Delta)$, then, the GMGF of $\gamma$ can be obtained in closed-form
as (\ref{gmgf}),
 \begin{figure*} 
\begin{align}\nonumber
	M_{\gamma}^{(n)}(s) &= n!{{{m^m}{{\left( {1 + K - \overline \gamma s} \right)}^{m-n-1}}}}
	{{\overline \gamma }^n}\sum\limits_{l = 0}^n {\binom{n}{l}\frac{{{{\left( m \right)}_l}}}{{l!}}} 
	\frac{{( {1 + K} )}^{l + 1}K^l}{{{{\left[ {m\left( {1 + K} \right) - \left( {m + K - K\Delta } \right)\overline \gamma s} 
\right]}^{m+l}}}}
	\sum\limits_{q = 0}^l 
	\binom{l}{q}
	{
		\left( {1 - \Delta } \right)^{l - q}}
	{\left( {2\Delta } \right)^q}
		\nonumber\\ 	& 
	\times 	\frac{{\Gamma \left( {\frac{1}{2} + q} \right)\Gamma \left( {\frac{1}{2}} \right)}}{{\pi\Gamma (q + 1)}} 
	 {{\kern 1pt}_2}{F_1}\left( {{m+l},\frac{1}{2} + q;q + 1;\frac{{2K\Delta \overline \gamma s}}{{m\left( {1 + K} \right) - \left
( {m + K - K\Delta } \right)\overline \gamma s}}} \right).\label{gmgf}
	 \\
	\hline \nonumber
\end{align}
 \end{figure*}
where $_2F_1(\cdot)$ is the Gaussian hypergeometric function \cite[eq. (9.100)]{Gradsh} and where $(a)_n$ is the Pochhammer symbol.
\end{lemma}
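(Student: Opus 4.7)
The plan is to apply Lemma~\ref{lemRSFTRmet} with $h(x)=x^n e^{xs}$, which immediately yields
\[
M_\gamma^{(n)}(s)=\frac{1}{\pi}\int_0^{\pi} M_\gamma^{(n),{\rm RS}}\left(s;m,K(1+\Delta\cos\theta)\right)d\theta,
\]
so the task reduces to (i) computing the GMGF of the squared RS variate in closed form, and (ii) evaluating the resulting finite-range $\theta$-integral. For step (i), I would insert \eqref{fRS} into \eqref{gmgfd} and apply the tabulated identity
\[
\int_0^\infty x^{n} e^{-p x}\,{}_1F_1(m;1;kx)\,dx = n!\,p^{-(n+1)}\,{}_2F_1(m,n+1;1;k/p),
\]
a direct specialization of \cite[7.621.4]{Gradsh}, while invoking Remark~\ref{rm1} to keep the ratio $\overline{\gamma}/(1+K)$ invariant under $K\mapsto K(1+\Delta\cos\theta)$. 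This produces a compact closed-form RS GMGF whose only non-elementary piece is a Gauss hypergeometric function with parameters $(m,n+1;1;\cdot)$.

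The key manipulation is then Pfaff's transformation ${}_2F_1(m,n+1;1;z)=(1-z)^{-m}\,{}_2F_1(m,-n;1;z/(z-1))$. Because the second parameter becomes a non-positive integer, the transformed series terminates as a polynomial of degree $n$, namely $\sum_{l=0}^{n}\binom{n}{l}\frac{(m)_l}{l!}(-w)^l$, which already reproduces the outer summation in \eqref{gmgf}. The prefactor $(1+K-\overline{\gamma}s)^{m-n-1}$ then appears naturally by combining the $(1-cs)^{-(n+1)}$ of the RS GMGF with the $(1-cs)^{m}$ brought down by Pfaff, where $c=\overline{\gamma}/(1+K)$. After this step, the only remaining $\theta$-dependence is concentrated in the numerator factor $K_r^{l}=K^{l}(1+\Delta\cos\theta)^{l}$ and in the denominator $[m(1+K-\overline{\gamma}s)-K_r\overline{\gamma}s]^{m+l}$.

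I would then substitute the half-angle identity $1+\Delta\cos\theta=(1-\Delta)+2\Delta\cos^2(\theta/2)$. Binomial expansion of $K_r^{l}$ produces the inner $q$-sum with exactly the weights $\binom{l}{q}(1-\Delta)^{l-q}(2\Delta)^{q}$, and the same substitution converts the denominator into $[\tilde{\alpha}-\tilde{\beta}\cos^2(\theta/2)]^{m+l}$ with $\tilde{\alpha}=m(1+K)-(m+K-K\Delta)\overline{\gamma}s$ and $\tilde{\beta}=2K\Delta\overline{\gamma}s$, matching the denominator and the hypergeometric argument in \eqref{gmgf}. The change of variables $u=\theta/2$ followed by $t=\cos^2 u$ casts each remaining integral as an Euler representation of ${}_2F_1$, which evaluates to $\frac{\Gamma(\frac{1}{2}+q)\Gamma(\frac{1}{2})}{\pi\Gamma(q+1)}\,{}_2F_1(m+l,\frac{1}{2}+q;q+1;\tilde{\beta}/\tilde{\alpha})$, completing the derivation. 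The main obstacle is purely bookkeeping: collecting the $(1+K)^{l+1}$, $K^{l}$, $\overline{\gamma}^{n}$ and $(1+K-\overline{\gamma}s)^{m-n-1}$ prefactors through the Pfaff and half-angle steps without sign slips, and verifying that both transformations are applied inside the domain of analyticity of the underlying RS MGF.
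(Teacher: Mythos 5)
Your proposal is correct and follows essentially the same route as the paper: conditioning on $\theta$ to reduce to the RS case, evaluating the Laplace-type integral via \cite[eq. (7.621.4)]{Gradsh}, applying the Pfaff transformation \cite[eq. (9.131)]{Gradsh} so the ${}_2F_1$ terminates into the outer $l$-sum, and then evaluating the remaining $\theta$-integral by the substitution $t=\cos^2(\theta/2)$ (equivalently $\cos\theta = 2t-1$), binomial expansion, and the Euler integral representation of ${}_2F_1$ — which is exactly the paper's integral I2. The only difference is presentational: the paper isolates the final $\theta$-integration as a standalone appendix result rather than carrying it out inline.
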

\begin{proof}
The conditional generalized MGF 
 can be written by plugging (\ref{fFTRc}) into \eqref{gmgfd}, i.e.:
\begin{align}\nonumber
	{M^{(n)}_{\gamma|\theta}}(s) &= 
\int_0^\infty  {x^n}\exp \left( {xs} \right)
			{\left( {\frac{m}{{m + K\left( {1 + \Delta \cos (\theta )} \right)}}} \right)^m}
				\\& \nonumber
		\times	\frac{{1 + K}}{{\overline \gamma }}
	\exp \left( { - \frac{{1 + K}}{{\overline \gamma }}x} \right)
		\\&  \times \;{_1}{F_1}\left( {m;1;\frac{{K\left( {1 + K} \right)\left( {1 + \Delta \cos (\theta )} \right)}}{{\overline \gamma m + K\overline \gamma \left( {1 + \Delta \cos (\theta )} \right)}}x} \right)
				dx. \label{mgfc}
\end{align}
The integral in \eqref{mgfc} can be solved with the help of \cite[eq. (7.621.4)]{Gradsh} as
\begin{align}\nonumber
&	M_{\gamma \left| \theta  \right.}^{(n)}(s)= \\ \nonumber
&  {\left( {\frac{m}{{m + K + K\Delta \cos (\theta )}}} \right)^m}\frac{{1 + K}}{{{{\left( {1 + K - \overline \gamma s} \right)}^{n +
 1}}}} 	{{\overline \gamma }^n} \Gamma (n + 1)
	\\& \nonumber
{_2}{F_1}\left( {m,n + 1;1;\frac{{\left( {1 + K} \right)\left( {K + K\Delta \cos (\theta )} \right)}}{{\left( {m + K + K\Delta 
\cos (\theta )} \right)\left( {1 + K - \overline \gamma s} \right)}}} \right),\label{mmgfn}
\\ \nonumber
& = {\left( {\frac{m(1+K-\overline{\gamma}s)}{m(1+K)-({m + K + K\Delta \cos (\theta )})\overline{\gamma} s}} \right)^m}
\\&\times 
\frac{({1 + K})\Gamma (n + 1)}{{{{\left( {1 + K - \overline \gamma s} \right)}^{n + 1}}}}  
	{{\overline \gamma }^n} \times
\nonumber	\\ 
	& 
	 \;{_2}{F_1}\left( {m,-n;1;
		-\frac{{\left( {1 + K} \right)\left( {K + K\Delta \cos (\theta )} \right)}}{{\left( {m + K + K\Delta \cos (\theta )} \right)
\left( {1 + K - \overline \gamma s} \right)}}} \right),
\end{align}
where the last equality is obtained with the help of  \cite [eq. (9.131)]{Gradsh}. 
Besides, based on 
\cite[p.17 eq. (12)]{phi2}
for integer $n$, the hypergeometric function can be written as
\begin{align} \nonumber
		_2{F_1}\left( { m,-n;c;z} \right)&=
	\;	_2{F_1}\left( { - n,m;c;z} \right) \\&= 
		\sum\limits_{l = 0}^n {{{( - 1)}^l}} \binom{n}{l}
		\frac{{{{\left( m \right)}_l}}}{{{{\left( c \right)}_l}}}{z^l},
\end{align}
thus, the conditional GMGF is determined as
\begin{align}\nonumber
M_{\gamma \left| \theta  \right.}^{(n)}(s)& = n!\frac{{{m^m}{{\left( {1 + K - \overline \gamma s} \right)}^m}}}{{{{\left( {1 + K - \overline \gamma s} \right)}^{n + 1}}}}{{\overline \gamma }^n}\sum\limits_{l = 0}^n {\binom{n}{l}\frac{{{{\left( m \right)}_l}}}{{l!}}} 
\\&
\times\frac{{{{\left( {1 + K} \right)}^{l + 1}}K^l{{\left( {1 + \Delta \cos (\theta )} \right)}^l}}}{{{{\left[ {m\left( {1 + K} \right) - \left( {m + K + K\Delta \cos (\theta )} \right)\overline \gamma s} \right]}^{l + m}}}}.\label{gmgfC}
\end{align}
Therefore, the unconditional GMGF will be given by
\begin{align}\nonumber
	M_{\gamma}^{(n)}	= \frac{1}{\pi} \int_0^{\pi}  M_{\gamma \left| \theta  \right.}^{(n)}(s) d \theta,
\end{align}
and with the help of integral I2 solved in the Appendix the proof is completed. 
\end{proof}

It should be noted that \eqref{gmgf} is valid for all possible values of the channel parameters when 
$s$ is a non-positive real number (which is usually the case in communication theory applications),
 since for the Gaussian hypergeometric function $_2 F_1(a,b;c;z)$ in the expression it is always true that $z \in \mathbb{R}$ with $z \leq 0$,
which permits its computation using its integral representation \cite[eq. (9.111)]{Gradsh}.

The moments of the SNR can be readily obtained from the GMGF, as we now show.
\begin{corollary}\label{moment}
Let  $\gamma \sim \mathcal{FTR}(\overline{\gamma};m,K,\Delta)$, then, the moments of $\gamma$ will be given by
\begin{align}\nonumber 
   	{\mu ^n} & 
  =
  {n!}
  {\left( {\frac{{\overline \gamma }}{{1 + K}}} \right)^n}\sum\limits_{l = 0}^n { \binom{n}{l} \frac{{{{K^l\left( m \right)}_l}}}{{l!m^l
  		}
  }}
  \\ & \times
   \, \sum_{q=0}^l (-1)^q \binom{l}{q} \frac{\Gamma(\frac{1}{2}+q)}{\sqrt{\pi}\Gamma(1+q)}\left( {2 \Delta}\right)^q (1-\Delta)^{l-q}. \label{moments}
	\end{align}
\end{corollary}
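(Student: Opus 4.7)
The natural plan is to exploit the identity $\mu^n = E[\gamma^n] = M_\gamma^{(n)}(s)\big|_{s=0}$ together with the closed-form GMGF already obtained in Lemma \ref{lemGMGF}: the corollary should follow by merely evaluating \eqref{gmgf} at $s=0$ and tidying the constants. I would prefer this route because it recycles work already done, but I would also keep in mind a direct alternative via Lemma \ref{lemRSFTRmet}.

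The first step is to set $s=0$ in \eqref{gmgf}. Every $s$-dependent piece then trivializes: the prefactor $(1+K-\overline{\gamma}s)^{m-n-1}$ becomes $(1+K)^{m-n-1}$, the bracket $[m(1+K)-(m+K-K\Delta)\overline{\gamma}s]^{m+l}$ becomes $[m(1+K)]^{m+l}$, and the Gauss hypergeometric function $_2F_1(m+l,1/2+q;q+1;\cdot)$ has argument $0$ and so equals $1$. Using $\Gamma(1/2)=\sqrt{\pi}$ collapses the factor $\Gamma(1/2)/\pi$ to $1/\sqrt{\pi}$. The remaining work is purely algebraic: combining $(1+K)^{m-n-1}\cdot(1+K)^{l+1}/(1+K)^{m+l}=(1+K)^{-n}$ and $m^{m}/m^{m+l}=m^{-l}$ gathers the $K$- and $m$-dependence into the clean factor $(\overline{\gamma}/(1+K))^n K^l/m^l$ seen in \eqref{moments}, while the inner $q$-sum is already in the desired form.

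As a cross-check and an alternative derivation, I would apply Lemma \ref{lemRSFTRmet} with $h(x)=x^n$. The $n$-th moment of the squared RS distribution can be computed directly from \eqref{fRS} using \cite[eq.\ (7.621.4)]{Gradsh}, yielding a $_2F_1(m,n+1;1;K/(m+K))$ which, after a Pfaff transformation, reduces to the terminating series $_2F_1(m,-n;1;-K/m)=\sum_{l=0}^n \binom{n}{l}(m)_l K^l/(l!\,m^l)$. Invoking Remark \ref{rm1}, which guarantees that $\overline{\gamma}/(1+K)$ is invariant under $K\mapsto K(1+\Delta\cos(\theta))$, the FTR moment then reduces to evaluating the trigonometric integral $\frac{1}{\pi}\int_0^{\pi}(1+\Delta\cos(\theta))^l d\theta$. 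Writing $1+\Delta\cos(\theta)=(1-\Delta)+2\Delta\cos^2(\theta/2)$, expanding by the binomial theorem, and applying the Wallis formula $\int_0^{\pi/2}\cos^{2q}(u)\,du=\frac{\sqrt{\pi}}{2}\,\Gamma(q+1/2)/\Gamma(q+1)$ yields exactly the inner $q$-sum in \eqref{moments}.

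No step presents a real obstacle; both routes are essentially bookkeeping. The only mild care is the consolidation of the $(1+K)$- and $m$-powers in the first approach, and, in the alternative, the appeal to Remark \ref{rm1} so that only the explicit $K^l$ term (and not the normalising factor $(\overline{\gamma}/(1+K))^n$) carries $\theta$-dependence through the averaging step.
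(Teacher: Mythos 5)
Your primary route is exactly the paper's proof: the authors' entire argument for this corollary is ``substitute $s=0$ in \eqref{gmgf}'', and your bookkeeping of the $(1+K)$- and $m$-powers and of $\Gamma(1/2)/\pi=1/\sqrt{\pi}$ is correct, with your RS/Wallis route serving as a valid independent cross-check. One caveat: direct substitution at $s=0$ (and equally your Wallis-formula computation of $\frac{1}{\pi}\int_0^{\pi}(1+\Delta\cos\theta)^l d\theta$) yields the inner sum \emph{without} the factor $(-1)^q$ that appears in the printed statement of \eqref{moments}; since the sign-free version correctly gives $\mu^1=\overline\gamma$ while the printed one does not, the $(-1)^q$ is a typo in the corollary rather than a defect of your derivation, so your remark that the $q$-sum is ``already in the desired form'' holds for the corrected statement, not the printed one.
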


\begin{proof}
	This result is obtained by simply substituting $s=0$ in (\ref{gmgf}). 
\end{proof}

The moments of the FTR model where presented in \cite{moment}, however, different expressions were given for different values of the parameter $\Delta$.
On the other hand, \eqref{moments} is valid for any $\Delta$.

\subsection{Incomplete MGF of the FTR model}
The incomplete MGF is widely used in different situations for analyzing the performance of communication 
systems, therefore it has prominent relevance in communication theory.
These situations include order statistics, symbol and bit error rate calculations, 
capacity analysis in fading channels, outage probability in cellular systems, adaptive scheduling techniques, cognitive relay 
networks, or physical layer security \cite{IMGF}.

The lower and upper IMGF of a random variable $X$ are defined by changing one of the limits of integration in the definition of the MGF by a non-negative real number 
$\delta\in(0, \infty)$, as follows \cite{IMGF}:
\begin{align}
	\mathcal{M}_X^l(s,\delta)\overset{\Delta}{=}\int_0^{\delta} e^{sx}f_X(x)dx
\end{align}
\begin{align}
	\mathcal{M}_X^u(s,\delta)\overset{\Delta}{=}\int_{\delta}^{\infty} e^{sx}f_X(x)dx.
\end{align}
The MGF of $X$ is related to the lower and upper MGF by the relations: 
\begin{align}
M_X(s)	= \mathcal{M}_X^l(s,\infty)=\mathcal{M}_X^u(s,0),
\end{align}
\begin{align}
\mathcal{M}_X^u(s,\delta)=	M_X(s)	- \mathcal{M}_X^l(s,\delta).\label{Mu}
\end{align}
In the following lemma, an expression of the lower incomplete MGF is provided for the FTR model.
\begin{lemma}
		Let $\gamma \sim \mathcal{FTR}(\overline{\gamma},m,K,\Delta)$, then, its lower 
IMGF is given in (\ref{Limgf}) (see the top of the next page), where $\Phi_2$ is the bivariate confluent hypergeometric 
function.  
\begin{figure*}
\begin{align}\label{Limgf}
	M_{\gamma}^{l}(s,z) = \frac{1}{\pi}\int_0^{\pi} \frac{m^m(1+K)z}{\overline{\gamma}(m+K(1+\Delta \cos(\theta)))^m}  
 \Phi_2 \left(1-m,m,2,\left(s-\frac{1+K}{\overline{\gamma}}\right)z, \left(s-\frac{m(1+K)}{\overline{\gamma}(m+K+K\Delta \cos(\theta))}
\right)z \right) d\theta.
 \\
 \hline \nonumber
 \end{align}
\end{figure*}
\end{lemma}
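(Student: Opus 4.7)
The plan is to follow the same blueprint used throughout the paper: reduce the FTR lower IMGF to a finite-range integral over $\theta$ of the corresponding RS statistic via Lemma \ref{lemRSFTRmet}, and then evaluate the RS lower IMGF in closed form. Setting $h(x)=e^{sx}$, $a=0$, $b=z$ in \eqref{eq:008} immediately gives
\begin{align*}
\mathcal{M}_\gamma^l(s,z) = \frac{1}{\pi}\int_0^\pi \mathcal{M}^l_{\gamma,\mathrm{RS}}\!\left(s,z;\overline\gamma,m,K(1+\Delta\cos\theta)\right) d\theta,
\end{align*}
so it suffices to derive the lower IMGF of the squared RS distribution.

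Substituting \eqref{fRS} into $\mathcal{M}^l_{\gamma,\mathrm{RS}}(s,z) = \int_0^z e^{sx} f^{\mathrm{RS}}_\gamma(x) dx$ and combining the two exponentials reduces the problem to an integral of the Kummer form
\begin{align*}
\int_0^z e^{\lambda x}\,{}_1F_1(m;1;\mu x) dx,
\end{align*}
with $\lambda = s - (1+K_r)/\overline\gamma$, $\mu = K_r(1+K_r)/[\overline\gamma(m+K_r)]$, and $K_r = K(1+\Delta\cos\theta)$. This is exactly the template that underlies the RS CDF derived in \cite[eq.~(8)]{RSParis}; the only novelty is that the exponential argument is shifted by $s$. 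The identity (verifiable by expanding the ${}_1F_1$ series, integrating $\int_0^z x^n e^{\lambda x}dx$ term by term, and identifying the resulting double sum through the series definition of $\Phi_2$) yields
\begin{align*}
\mathcal{M}^l_{\gamma,\mathrm{RS}}(s,z) = \frac{m^m(1+K_r)z}{\overline\gamma(m+K_r)^m}\,\Phi_2\!\left(1-m,m;2;\lambda z,(\lambda+\mu)z\right).
\end{align*}

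A short simplification yields $\lambda+\mu = s - m(1+K_r)/[\overline\gamma(m+K_r)]$. Invoking Remark \ref{rm1} to replace the invariant factor $(1+K_r)/\overline\gamma$ with $(1+K)/\overline\gamma$, substituting $K_r=K(1+\Delta\cos\theta)$, and averaging over $\theta$ then produces exactly \eqref{Limgf}. The principal technical step is the ${}_1F_1$-to-$\Phi_2$ integral identity, but since it is already implicit in the RS CDF derivation it can be reused verbatim; the remaining work is parameter bookkeeping to verify that both $\Phi_2$ arguments and the prefactor align with the statement. A natural sanity check is that setting $s=0$ must recover the FTR CDF, since $\mathcal{M}_\gamma^l(0,z)=F^{\mathrm{FTR}}_\gamma(z)$, and indeed the first argument then becomes $-(1+K)z/\overline\gamma$ and the second $-m(1+K)z/[\overline\gamma(m+K+K\Delta\cos\theta)]$, matching \eqref{CDF} exactly.
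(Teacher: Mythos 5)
Your proposal is correct and follows essentially the same route as the paper: both reduce the FTR lower IMGF via Lemma \ref{lemRSFTRmet} to a finite-range $\theta$-average of the RS lower IMGF with $K_r = K(1+\Delta\cos\theta)$. The only difference is that the paper simply cites the closed-form RS lower IMGF from \cite[eq.~(13)]{IMGF}, whereas you re-derive it through the ${}_1F_1$-to-$\Phi_2$ integral identity; your parameter bookkeeping ($\lambda+\mu$ and the invariance of $(1+K_r)/\overline\gamma$ from Remark \ref{rm1}) and the $s=0$ consistency check against \eqref{CDF} are both sound.
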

\begin{proof}
This result is obtained by considering Lemma \ref{lemRSFTRmet} and using the lower IMGF of the RS model presented 
in \cite[eq. (13)]{IMGF}.
\end{proof}

  \section{FTR Formulation as a Continuous Mixture of Nakagami-$m$ Variates}
 In this section, we describe a formulation to connect the Nakagami-$m$ model to the FTR fading for integer $m$. In this 
regard, we introduce the PDF, CDF and the incomplete generalized MGF of the FTR fading model in terms of finite-range integrals of elementary functions.
Note that the GMGF, the IMGF, and the moments are particular cases of the IGMGF, from which they can easily be obtained.
We also show that any given metric already known for Nakagami-$m$ can be readily extended to the FTR case when $m$ is an integer.

  \begin{lemma}\label{lemnakaftr} 
 Let  $\gamma$  be a random variable such that $\gamma \sim \mathcal{FTR}(\overline{\gamma};m,K,\Delta)$ with $m \in \mathbb{N}$, then $\gamma$ is a 
continuous mixture of squared Nakagami-$m$ variates, which PDF and CDF are given, respectively, as
 	\begin{align}\nonumber
 		f_\gamma&^{\rm FTR}(x;\overline\gamma,m,K,\Delta)=
 		\\ &\frac{1}{\pi}\int_0^{\pi} \sum_{i=0}^{m-1}  C_i(\theta) f_{\gamma | {\theta,i}}^{\mathcal{K}}(x; (m-i)\Omega(\theta),m-i) d\theta,\label{fNak}
		\\ 
		F_\gamma&^{\rm FTR}(x;\overline\gamma,m,K,\Delta)= \nonumber
 		\\ & 1-\frac{1}{\pi}\int_0^{\pi} \sum\limits_{i = 0}^{m - 1 } {C_i(\theta) } e^{ - x/\Omega(\theta) } \sum\limits_{r = 0}^{m - i - 1} {\frac{1}
{{r!}}\left( {\frac{x}
{{\Omega(\theta)  }}} \right)^r } d\theta, \label{cdfNak}
 	\end{align}
 		where $\gamma | {\theta,i} \sim \mathcal{K}(x; (m-i)\Omega(\theta),m-i)$ and
 	\begin{align}\label{Ci}
 		&C_i(\theta)=\binom{m-1}{i}  \frac{m^i\left( {K(1+\Delta \cos(\theta))}\right)^{m-i-1}}{\left(m+K(1+\Delta \cos(\theta))\right)^{m-1}},
 		\\
 		&\Omega(\theta)=\frac{\overline\gamma}{1+K}\frac{m+K(1+\Delta \cos(\theta))}{m}.\label{Omega}
 	\end{align}
 \end{lemma}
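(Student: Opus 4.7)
The plan is to leverage Lemma~\ref{lemRSFTR} and reduce the problem to a finite-mixture decomposition of the squared RS distribution that is valid when $m\in\mathbb{N}$. Lemma~\ref{lemRSFTR} already expresses the FTR PDF as a normalized finite-range integral over $\theta\in[0,\pi)$ of the conditional RS PDF with modified parameter $K_r(\theta)\triangleq K(1+\Delta\cos\theta)$, while Remark~\ref{rm1} guarantees that the ratio $\overline\gamma/(1+K)$ is preserved under the conditioning. Hence it suffices to show that, for integer $m$, any squared RS PDF admits a finite Erlang-mixture representation
\begin{equation*}
f^{\rm RS}_\gamma(x;\overline\gamma,m,K_r)=\sum_{i=0}^{m-1} C_i\, f^{\mathcal{K}}_\gamma(x;(m-i)\Omega,m-i),
\end{equation*}
with the $C_i$ and $\Omega$ as in \eqref{Ci}--\eqref{Omega}; both \eqref{fNak} and \eqref{cdfNak} will then drop out of Lemma~\ref{lemRSFTR} by linearity and Fubini.

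To obtain the RS decomposition I would work at the Laplace level, where the algebra is cleanest. Setting $\beta=\overline\gamma/(1+K)$, the RS MGF in \eqref{mgfRS} collapses to $(1-\beta s)^{m-1}(1-\Omega s)^{-m}$, because $(m+K_r)\beta/m=\Omega$. The crucial rewrite is
\begin{equation*}
1-\beta s \;=\; \alpha + (1-\alpha)(1-\Omega s),\qquad \alpha \triangleq \frac{K_r}{m+K_r},
\end{equation*}
which turns the binomial expansion of $(1-\beta s)^{m-1}$ into a polynomial in $(1-\Omega s)$. Dividing by $(1-\Omega s)^m$ yields $M^{\rm RS}_\gamma(s)=\sum_{i=0}^{m-1} C_i (1-\Omega s)^{-(m-i)}$ with coefficients that, after substitution of $K_r=K(1+\Delta\cos\theta)$, match \eqref{Ci} exactly. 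Since $(1-\Omega s)^{-(m-i)}$ is the MGF of $\mathcal{K}((m-i)\Omega,m-i)$, MGF uniqueness delivers the PDF mixture, and the binomial identity $\sum_{i} C_i = ((K_r+m)/(m+K_r))^{m-1}=1$ confirms it is a genuine (convex) mixture.

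With the RS-to-Nakagami expansion in hand, I would substitute it into \eqref{RS2FTR} and exchange the finite sum with the $\theta$-integration (justified by non-negativity and boundedness of the integrand) to obtain \eqref{fNak}. For \eqref{cdfNak} I would integrate \eqref{fNak} termwise in $x$, invoke the textbook Erlang CDF
\begin{equation*}
\int_0^x f^{\mathcal{K}}(u;(m-i)\Omega,m-i)\,du = 1-e^{-x/\Omega}\sum_{r=0}^{m-i-1}\frac{(x/\Omega)^r}{r!},
\end{equation*}
and absorb the resulting constant term through $\sum_{i=0}^{m-1}C_i(\theta)=1$ combined with $\tfrac{1}{\pi}\int_0^\pi d\theta=1$.

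The hard part is purely structural: spotting the rewrite $1-\beta s=\alpha+(1-\alpha)(1-\Omega s)$ that ties together the two distinct linear factors appearing in the RS MGF and so exposes its partial-fraction decomposition in powers of $(1-\Omega s)^{-1}$. Once this observation is made, the rest of the argument is the binomial theorem, a Fubini swap on a finite non-negative sum, and the standard Erlang CDF, all of which are routine.
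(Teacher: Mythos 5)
Your proof is correct and follows the same overall skeleton as the paper's: reduce FTR to a $\theta$-mixture of RS via Lemma~\ref{lemRSFTR}/Lemma~\ref{lemRSFTRmet}, decompose the integer-$m$ RS law as a finite Erlang (squared Nakagami-$m$) mixture, and swap the finite sum with the $\theta$-integral. The one genuine difference is in how that middle step is handled: the paper simply cites the known $\kappa$--$\mu$ Shadowed result of \cite{kappamuinteger} for the expansion $f^{\rm RS}=\sum_j B_j f^{\mathcal{K}}$, whereas you rederive it from the RS MGF \eqref{mgfRS} via the rewrite $1-\beta s=\alpha+(1-\alpha)(1-\Omega s)$ with $\alpha=K_r/(m+K_r)$, a binomial expansion, and MGF uniqueness. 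I checked your algebra: the coefficient of $(1-\Omega s)^{-(m-i)}$ is $\binom{m-1}{i}m^i K_r^{m-1-i}/(m+K_r)^{m-1}$, which with $K_r=K(1+\Delta\cos\theta)$ reproduces \eqref{Ci}, the factor $(1-\Omega s)^{-(m-i)}$ is indeed the MGF of $\mathcal{K}((m-i)\Omega,m-i)$ under Definition~3, and $\sum_i C_i=1$ confirms a convex mixture. Your route makes the lemma self-contained and exposes why integrality of $m$ is essential (the binomial expansion terminates only then), at the cost of a page of Laplace-domain algebra; the paper's route is shorter but leans on an external reference. The remaining steps (Fubini on a non-negative finite sum, termwise Erlang CDF, absorbing the constant via $\sum_i C_i(\theta)=1$ and $\frac{1}{\pi}\int_0^\pi d\theta=1$) are all sound.
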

 \begin{proof}
 	It was shown in \cite{kappamuinteger} that, when parameter $m$ is an integer, the PDF and CDF of the squared RS distribution can be expressed
	in terms of the statistics of the squared Nakagami-$m$.
	By recognizing that the RS fading model is a special case of the $\kappa-\mu$ Shadowed fading model when $\mu=1$ and 
$\kappa = K$ we can write:
 	\begin{align}\label{rsnak}
 		&f^{\rm RS}(x;\overline\gamma,m,K)=\sum_{j=0}^{m-1} B_j f^{\mathcal{K}}(x;(m-j)\Omega_B;m-j), \\&
		F^{\rm RS}(x;\overline\gamma,m,K) = 1 - \sum\limits_{1 = 0}^{m - 1 } {B_j^{} } e^{ - x/\Omega _B } \sum\limits_{r = 0}^{m - j - 1} {\frac{1}
{{r!}}\left( {\frac{x}
{{\Omega _B }}} \right)^r } ,
 	\end{align}
 	where 	 	
	\begin{align}\label{bj}
		&B_j  = \left( {\begin{array}{c}
   {m - 1 }  \\ 
   j  \\ 
 \end{array} } \right) {\frac{m^j  K^{m - 1  - j}}
{{(m  + K)}^{m - 1 }}}   ,  \\
&\Omega_B=\frac{\overline\gamma}{1+K}\frac{m+K}{m} .\label{omega}
		\end{align}
		
 	If we now apply Lemma~\ref{lemRSFTRmet}, the PDF and CDF of the FTR model are obtained.
 \end{proof}
 
We now demonstrate that any performance metric already known for Nakagami-$m$ fading with integer $m$ can readily be extended to FTR fading
by applying a finite range integral to the Nakagami-$m$ metric.

\begin{lemma}\label{lemmakaftr}
Let $h(\gamma)$ be a performance metric (or function) depending on the instantaneous SNR $\gamma$, and let 
${{X}^{\mathcal{K}}}(\overline\gamma,m)$ be the metric (or statistical function) under Nakagami-$m$  fading with average 
SNR $\overline\gamma$ and integer $m$ obtained by averaging over an interval of the PDF of the SNR, i.e.,
\begin{equation} \label{eq:0082}
	 X^{{\mathcal{K}}} (\overline \gamma,m)= \int_a^b 
	 h(x) f_\gamma ^{{\mathcal{K}}}(x;\overline \gamma, m,) 	 dx,
\end{equation}
where $0 \le a\le b < \infty$. Then, the average performance metric in FTR fading channels with average SNR 
$\overline\gamma$ and integer $m$ can be calculated as
 	\begin{align}\nonumber
 		&	X^{\rm FTR}(\overline\gamma,m,K,\Delta)
 		\\&=\frac{1}{\pi}\int_0^{\pi} \sum_{i=0}^{m-1} C_i(\theta) X^{\mathcal{K}}\left((m-i){\Omega(\theta)},m-i\right)d\theta.
 	\end{align}
 
\end{lemma}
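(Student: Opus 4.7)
The plan is to mirror the argument used for Lemma \ref{lemRSFTRmet}, but now starting from the Nakagami-$m$ mixture representation of the FTR PDF established in Lemma \ref{lemnakaftr} instead of the RS mixture representation. The statement has exactly the same structure: a metric defined as a single integral of $h(x)$ against the FTR PDF is rewritten, via the mixture expression, as a finite-range $\theta$-integral of a weighted sum of the same metric evaluated on squared Nakagami-$m$ PDFs.

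First I would write the average metric in FTR fading as
\begin{align}
X^{\rm FTR}(\overline\gamma,m,K,\Delta)=\int_a^b h(x)\,f_\gamma^{\rm FTR}(x;\overline\gamma,m,K,\Delta)\,dx.
\end{align}
Next I would substitute the mixture expression \eqref{fNak} from Lemma \ref{lemnakaftr} inside this integral, producing a triple operation: an outer integral in $x$ over $[a,b]$, an inner integral in $\theta$ over $[0,\pi]$, and a finite sum over $i=0,\dots,m-1$ of the coefficients $C_i(\theta)$ multiplied by the squared Nakagami-$m$ PDFs $f^{\mathcal{K}}(x;(m-i)\Omega(\theta),m-i)$.

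The key step is then to interchange the order of the three operations so that the $x$-integration is performed innermost. Pulling the finite sum out of the $x$-integral is immediate by linearity. Swapping the $x$ and $\theta$ integrals is justified by Fubini's theorem: on the compact rectangle $[a,b]\times[0,\pi]$ the integrand is continuous in both variables (recall $C_i(\theta)$ and $\Omega(\theta)$ defined in \eqref{Ci}--\eqref{Omega} are continuous and strictly positive for $\theta\in[0,\pi]$, and the Nakagami-$m$ PDF in \eqref{fNAK} is continuous in its arguments), hence integrable, and if $b=\infty$ the integrand is nonnegative so Tonelli's theorem applies directly. After the swap the inner $x$-integral is precisely the definition \eqref{eq:0082} of $X^{\mathcal{K}}((m-i)\Omega(\theta),m-i)$, and collecting terms yields the claimed identity.

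I do not anticipate a serious technical obstacle: because $m$ is an integer the mixture in \eqref{fNak} is a finite sum rather than an infinite series, so no uniform convergence argument is required, and the only substantive condition is that $h(x)$ is such that the underlying Nakagami-$m$ metric $X^{\mathcal{K}}$ is well-defined on $[a,b]$, which is assumed in the hypothesis. A brief remark on the invariance of $\overline\gamma/(1+K)$, as emphasized in Remark \ref{rm1}, should accompany the proof so that the reader sees why the Nakagami-$m$ scale parameter $(m-i)\Omega(\theta)$ is the correct average SNR to feed into $X^{\mathcal{K}}$.
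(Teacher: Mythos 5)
Your proposal is correct and is essentially the paper's argument: the paper simply composes Lemma~\ref{lemRSFTRmet} (FTR metric as a $\theta$-integral of the RS metric) with the finite RS-to-Nakagami-$m$ decomposition from Lemma~\ref{lemnakaftr}, which amounts to exactly the substitution of \eqref{fNak} into the metric integral and the interchange of the $x$- and $\theta$-integrations that you carry out directly. Your explicit Fubini/Tonelli justification and the remark on the invariance of $\overline\gamma/(1+K)$ are harmless additions that the paper leaves implicit.
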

 \begin{proof}
 	From Lemma~\ref{lemRSFTRmet}, any metric or statistical function for the FTR model, $X^{\rm FTR}(\overline\gamma,m,K,\Delta)$, can be 
calculated in terms of the RS metric. By considering that, for  integer $m$, the statistics 
of the RS model can be written in terms of the Nakagami-$m$ model,  as shown in Lemma \ref{lemnakaftr}, the proof is completed. 

 \end{proof}

It is important to note that the Nakagami-$m$ model has been for decades a widely used stochastic fading model for wireless channels, for which
closed-form results are available for a myriad of performance metrics and system models. From Lemma \ref{lemmakaftr}, all those results can be readily extended in 
a straightforward manner to the much more general FTR fading model.

 
 \subsection{Incomplete generalized MGF of FTR}
The incomplete generalized MGF finds application in the outage probability calculation in the presence of co-channel interference
and background noise \cite{CCI} or in the secrecy outage when the legitimate link undergoes arbitrary fading \cite{IMGF}.
The (upper) incomplete generalized MGF of a random variable $X$ is defined as follows:
 \begin{align}
 	G_X(n,s,\Lambda) = \int_{\Lambda}^{\infty} x^n e^{sx} f_X(x) dx,
 \end{align}
where $f_X(x)$ is the PDF of $X$ and we assume $n\in \mathbb{N}$. 

\begin{lemma}\label{lemIGMGF}
	Let a random variable $\gamma \sim \mathcal{FTR}(\overline \gamma; m, K,\Delta)$ with $m \in \mathbb{N}$. Then, the IGMGF of $\gamma$  
is obtained as 
\begin{align}\nonumber
		&	G^{\rm FTR}_{\gamma}(n,s,\Lambda;\overline\gamma,m,K,\Delta)=
	\\&\frac{1}{\pi}\int_0^{\pi} \sum_{i=0}^{m-1} C_i(\theta) G^{\mathcal{K}}_{\gamma}\left(n,s,\Lambda;( m-i){\Omega(\theta)}, m-
i\right)d\theta,
	\end{align}
where $C_i(\theta) $ and $\Omega(\theta)$ are defined in (\ref{Ci}) and (\ref{Omega}), respectively, and	$G^{\mathcal{K}}_{\gamma}
\left(n,s,\Lambda;\hat \gamma,\hat m\right)$ is the IGMGF of the Nakagami-$m$ fading that was obtained in \cite[eq. (25)]{CCI}
which, using 	\cite[eq. 8.352.2]{Gradsh}, can be written as
\begin{align}\nonumber
	&G^{\mathcal{K}}_{\gamma}\left(n,s,\Lambda;\hat \gamma,\hat m\right) =\frac{(\hat m/\hat \gamma)^{\hat m}}{(\hat m/\hat \gamma
 -s)^{\hat m+n}}
\\ &
\times \frac{{\left( {\hat m + n - 1} \right)!}}
{{\left( {\hat m - 1} \right)!}}e^{ - \left( {\hat m /\hat\gamma  - s} \right)\Lambda } \sum\limits_{j = 0}^{\hat m + n - 1} {\frac{{\left( {\hat m/\hat\gamma  - s} \right)^j \Lambda ^j }}
{{j!}}} \label{NakIGMGF}. 
\end{align}
\end{lemma}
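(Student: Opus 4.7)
My plan is to derive this lemma as an essentially direct specialization of Lemma \ref{lemmakaftr}, together with a routine evaluation of the Nakagami-$m$ IGMGF in closed form. The key observation is that the IGMGF fits exactly the template \eqref{eq:0082}: taking the performance metric to be $h(x)=x^n e^{sx}$ and the integration interval to be $[\Lambda,\infty)$, the corresponding Nakagami-$m$ statistic is, by definition, $X^{\mathcal{K}}(\overline\gamma,m)=G^{\mathcal{K}}_\gamma(n,s,\Lambda;\overline\gamma,m)$. Hence the finite-range integral representation follows immediately by invoking Lemma \ref{lemmakaftr}, with $(\overline\gamma,m)$ replaced inside the sum by the effective Nakagami parameters $((m-i)\Omega(\theta),m-i)$ and weighted by $C_i(\theta)$.

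The only remaining task is to justify the closed form \eqref{NakIGMGF} of the underlying Nakagami-$m$ IGMGF. First I would substitute the Nakagami-$m$ PDF \eqref{fNAK} into the defining integral
\begin{align}
G^{\mathcal{K}}_\gamma(n,s,\Lambda;\hat\gamma,\hat m)=\left(\frac{\hat m}{\hat\gamma}\right)^{\hat m}\!\frac{1}{(\hat m-1)!}\int_{\Lambda}^{\infty}\!\!x^{\hat m+n-1}e^{-(\hat m/\hat\gamma-s)x}\,dx,
\end{align}
and identify the remaining integral with the upper incomplete gamma function $\Gamma(\hat m+n,(\hat m/\hat\gamma-s)\Lambda)$ after the trivial change of variable $u=(\hat m/\hat\gamma-s)x$ (which requires $\mathrm{Re}(\hat m/\hat\gamma-s)>0$, the usual convergence condition). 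Since $\hat m+n$ is a positive integer, the finite-sum representation of $\Gamma(k,z)$ given by \cite[eq.~8.352.2]{Gradsh} reduces $\Gamma(\hat m+n,\cdot)$ to the polynomial-times-exponential expression appearing in \eqref{NakIGMGF}, and collecting the prefactor $(\hat m/\hat\gamma)^{\hat m}/(\hat m-1)!$ yields the stated formula.

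Combining these two pieces — the structural result from Lemma \ref{lemmakaftr} applied with the IGMGF as the metric, and the elementary Nakagami-$m$ evaluation — produces the claimed expression. I do not anticipate a serious obstacle: the statement is really a corollary of Lemma \ref{lemmakaftr} once one recognizes the IGMGF as an integral of the PDF against $x^n e^{sx}$ on $[\Lambda,\infty)$. The only technical care needed is the convergence condition $\mathrm{Re}(s)<\hat m/\hat\gamma$, which is implicit in the definition of the IGMGF and is inherited componentwise by each Nakagami-$m$ term in the mixture (noting that $(m-i)/\Omega(\theta)=m(1+K)/\overline\gamma\cdot 1/(1+\tfrac{K(1+\Delta\cos\theta)}{m})\cdot(m-i)/(m-i)$ is uniformly bounded in $\theta\in[0,\pi]$ for fixed channel parameters, so a single domain for $s$ suffices across the whole integrand).
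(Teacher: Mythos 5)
Your proposal is correct and follows essentially the same route as the paper: the paper's proof is a one-line invocation of Lemma~\ref{lemmakaftr} with the IGMGF as the metric $h(x)=x^n e^{sx}$ averaged over $[\Lambda,\infty)$, combined with the known Nakagami-$m$ IGMGF. The only difference is that you re-derive the closed form \eqref{NakIGMGF} from the incomplete gamma function (correctly) where the paper simply cites it, and you make explicit the convergence condition $\mathrm{Re}(s)<\hat m/\hat\gamma$ that the paper leaves implicit.
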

\begin{proof}
This result is obtained by the direct application of Lemma \ref{lemmakaftr} to \eqref{NakIGMGF}. 
\end{proof}

\section{Application Example:
Outage Probability of Multi-antenna receiver with CCI }

The statistical analysis carried out in the previous sections allows to analyze numerous wireless scenarios undergoing FTR 
fading, including those arising from the incomplete and generalized MGF formulations, as well as those available in the literature when assuming RS or Nakagami-$m$ fading.
As an example of application, in this section we investigate the outage probability in a  wireless
communication system in the presence co-channel interference (CCI) when the desired user experiences FTR fading and considering 
two different scenarios: $(A)$ the desired signal experiences FTR fading (with integer $m$) and the receiver suffers CCI and 
background noise; and $(B)$ the desired signal experience FTR fading (with arbitrary $m$) and the receiver is CCI limited, i.e., the background noise is assumed to be negligible. 

Consider a wireless communication system where the receiver is affected by $L$ interfering signals and additive white Gaussian 
noise~(AWGN). The received signal from the desired user and the interfering users undergo FTR and i.i.d. 
Rayleigh fading, respectively. Thus, the SINR at the receiver can be written as
\begin{align} \label{sinr}
	SINR =\frac{W}{Y+N_0},
	\end{align}
where $W$ is the received power from the desired signal, $Y$ is the sum of $L$ independent exponential random variables, and 
$N_0$ is the background noise power.
The outage probability of the SINR for the considered system  can be defined as 
\begin{align}
	P_{\rm out}= P\left(\frac{W}{Y+N_0}<R_{th}\right),\label{pcci}
		\end{align}
where $R_{th}$ denotes the SINR threshold. On the other hand, the CDF of the total interference power can be expressed as \cite[eq. (8)]{CCI}
	\begin{align}
	F_Y(y)=1- e^{-y/P_I}\sum_{k=0}^{L-1}\frac{1}{k!}\left(\frac{y}{P_I}\right)^k,\label{fy}
\end{align}
where $P_I$ denotes the received power for every CCI signal which, for simplicity, is assumed to be the same for all interferers. 
\subsection{CCI with background noise case with integer $m$}
The outage probability when the background noise cannot be neglected is given by
\begin{align}\nonumber
		P_{\rm out} =&F_W(R_{th}N_0)
	\\& +\int_{R_{th}N_0}^{\infty} \left[1-F_Y\left(\frac{w}{R_{th}}-N_0\right)\right] f_W(w) dw,\label{pout}
\end{align}
where $f_W(w)$ is the PDF of $W$. Note that all the statistical functions of $W$ are intimately related to functions of the SNR $\gamma$
derived in the previous sections, as $W = \frac{\gamma}{E_s/\overline N_0}$; and it is straightforward to show that,
in practice, they are obtained by considering $\overline W$ instead of $\overline \gamma$ in the expressions, with 
$\overline W = \frac{\overline \gamma}{E_s/\overline N_0}$.

From \eqref{fy} and \eqref{pout}, the outage probability can be written as \cite[eq. (21)]{CCI}
\begin{align}\nonumber\label{Pout}
	P_{\rm out} &
	= F_W(R_{th}N_0) +
	\sum_{k=0}^{L-1} \sum_{l=0}^k \frac{e^{N_0/P_I} (-N_0)^{k-l}}{l!(k-l)!{P_I}^k {R_{th}}^l}
	\\&\times
	G_W\left(l,-\frac{1}{R_{th}P_I},R_{th}N_0\right),
\end{align}
where $F_W(w)$ and $G_W(n,s,\Lambda)$ are, respectively, the CDF and IGMGF of the received power signal under FTR  fading, which can be directly
obtained for integer $m$ from \eqref{cdfNak} and \eqref{NakIGMGF}, respectively, yielding \eqref{PoutageNew}, which can be efficiently
 computed by simple finite-range integrations of elementary functions.

\begin{figure*}
	\begin{align}\nonumber\label{PoutageNew}
		P&_{\rm out} 
 = 1 - \frac{1}
{\pi }\sum\limits_{i = 0}^{m - 1} {\int_0^\pi  {C_i \left( \theta  \right)e^{ - R_{th} N_0 /\Omega \left( \theta  \right)} \sum\limits_{r = 0}^{m - i - 1} {\frac{1}
{{r!}}\left( {\frac{{R_{th} N_0 }}
{{\Omega \left( \theta  \right)}}} \right)^r } d\theta } } 
		+	\sum_{k=0}^{L-1} \sum_{l=0}^k \frac{e^{N_0/P_I} (-N_0)^{k-l}}{l!(k-l)!{P_I}^k {R_{th}}^l} 
					\\& 
		\times
		\frac {1}{\pi}\int_0^{\pi} \sum_{i=0}^{m-1}  C_i(\theta)
		{\left(\frac{1}{\Omega(\theta)}\right)^{m-i}}
\frac{{(m - i + l - 1)!}}
{{\left( {m - i - 1} \right)!}}e^{ - \left( {1/\Omega \left( \theta  \right) + 1/R_{th} P_I } \right)R_{th} N_0 } \sum\limits_{j = 0}^{m - i + l - 1} {\frac{{\left( {R_{th} N_0 } \right)^j }}
{{j!\left( {1/\Omega \left( \theta  \right) + 1/R_{th} P_I } \right)^{m - i + l - j} }}} 
		d\theta.
			\\
	\hline \nonumber
		\end{align}
\end{figure*}

	\subsection{Interference limited case with arbitrary $m$}
	We now consider a wireless communication system with $N$ receive antennas performing MRC with negligible background noise,
	i.e., the system is interference limited with $L$ i.i.d. Rayleigh interferers. In this case, $W$ in \eqref{sinr} represents the output signal at the MRC receiver
	from the desired user, which is assumed to undergo FTR fading with arbitrary $m$.
 The outage probability can be computed now as 
				\begin{align}\label{out-ILS}
			\hat P_{\rm out} = 1- \int_0^{\infty} F_Y \left(w/\hat R_{th}\right) f_W(w)dw,
		\end{align}
		where $\hat R_{th}$ is the SIR threshold. 
	By plugging (\ref{fy}) into \eqref{out-ILS}, we have 
		\begin{align}
		\hat P_{\rm out} = \sum_{k=0}^{L-1}\frac{1}{k!}\left(\frac{1}{\hat R_{th}P_I}\right)^k
		\int_0^{\infty}w^k e^{-w/\hat R_{th}P_I} f_W(w)dw.\label{poutint}
	\end{align}
The integral in (\ref{poutint}) represents the $k$-th order  generalized MGF of $W$, $M^{(k)}_W(s)$, when $s=\frac{-1}{\hat 
R_{th}P_I}$. Assuming the received power at every antenna to be affected by i.i.d. FTR fading with average $\overline W$, the MGF of $W$ at the MRC output
will be given by
\begin{align}
	M_W(s) = \prod_{i=1}^N M_{W_i}(s),
\end{align}
where $W_i$ is the received signal power at the $i$-th antenna. The $k$-th ordered GMGF of $W$ is determined as \cite[
eq. (13)]{CCI}
\begin{align}
	M^{(k)}_W(s) = \frac{d^k M_W(s)}{ds^k} = k! \sum_{\mathcal{U}} \prod_{i=1}^N \frac{M^{(u_i)}_{W_i}(s)}{u_i !},
\end{align}
where $\mathcal{U}$ is a set of $N$-tuples such that $\mathcal{U}=\{(u_1 ... u_N), u_i\in \mathbb{N},  ~\sum_{i=1}^N u_i = k\}$, 
 and $M^{(u_i)}_{W_i}(s)$ is the $u_i$-th order GMGF of the received signal power at the $i$-th antenna.
 Then, the  outage probability will be given by  \cite[eq. (15)]{CCI}
\begin{align}
\hat P_{\rm out} =
\sum_{k=0}^{L-1}\left(\frac{1}{\hat R_{th}P_I}\right)^k
 \sum_{\mathcal{U}} \prod_{i=1}^N \frac{1}{u_i !} {M^{(u_i)}_{W}\left(\frac{1}{\hat R_{th}P_I}\right)},
\end{align}
where $M^{(u_i)}_{W_i}(s)$ is computed from (\ref{gmgf}) by simply considering the relation $W_i = \frac{\gamma_i}{E_s/\overline N_0}$.
Thus, a closed-form outage probability expression for arbitrary $m$ is obtained in \eqref{CCI1}.
\begin{figure*}
\begin{align}\nonumber
&	\hat P_{\rm out} =
\\ & \nonumber
	\sum_{k=0}^{L-1}\left(\frac{1}{\hat R_{th}P_I}\right)^k
	\sum_{\mathcal{U}} \prod_{i=1}^N {{{m^m}{{\left( {1 + K + \frac{\overline W} {\hat R_{th}P_I}} \right)}^{m-u_i-1}}}}
	{{\overline W }^{u_i}}\sum\limits_{l = 0}^{u_i} {\binom{u_i}{l}\frac{{{{\left( m \right)}_l}}}{{l!}}} 
	\frac{{( {1 + K} )}^{l + 1}K}{{{{\left[ {m\left( {1 + K} \right) +\left( {m + K - K\Delta } \right)\overline W /{\hat R_{th}P_I}} \right]}^{m+l}}}}
\nonumber	\\&
\times	\sum\limits_{q = 0}^l 
	\binom{l}{q}
	{
		\left( {1 - \Delta } \right)^{l - q}}
		{\left( {2\Delta } \right)^q}
	\frac{{\Gamma \left( {\frac{1}{2} + q} \right)\Gamma \left( {\frac{1}{2}} \right)}}{{\pi\Gamma (q + 1)}} 
	{{\kern 1pt}_2}{F_1}\left( {{m+l},\frac{1}{2} + q;q + 1;\frac{-{2K\Delta \overline W }}{{m\left( {1 + K} \right){\hat R_{th}P_I} + \left( {m + K - K\Delta } \right)\overline W }}} \right).\label{CCI1}
		\\
	\hline \nonumber
\end{align}
\end{figure*}


\section{Numerical results}
In this section, numerical results obtained from our analytical derivations are presented for the new expressions of
the PDF and for the outage probability in the presence of interference in the two scenarios 
defined in the previous section. Results of scenario A are based on the connection between FTR and Nakagami-$m$ fading for  integer $m$, while
the connection between RS and FTR fading for arbitrary $m$ is exploited in scenario B.
These results are validated by Monte Carlo simulations, showing an excellent agreement.

In Fig. \ref{pdf}, the PDF of the FTR fading model is plotted for two values of the $m$ parameter: one integer ($m=3$) and one 
non-integer $m=1.5$ which has been obtain from \eqref{fNak} (for the case of $m$ integer) and \eqref{fFTR} (for both cases). Simulation 
results show a perfect match to the analytical results in all cases.

Regarding outage probability results, for scenario A we define the normalized average SINR as
\begin{align}
\overline{SINR}= \frac{\overline W}{L P_I +N_0},
\label{nsinr}
\end{align}
where $\overline W$ is the mean received power of the desired signal  and $P_I$ is the mean received power of an interferer.

\begin{figure}[t]
	\centering
	\includegraphics[width=1.05\linewidth]{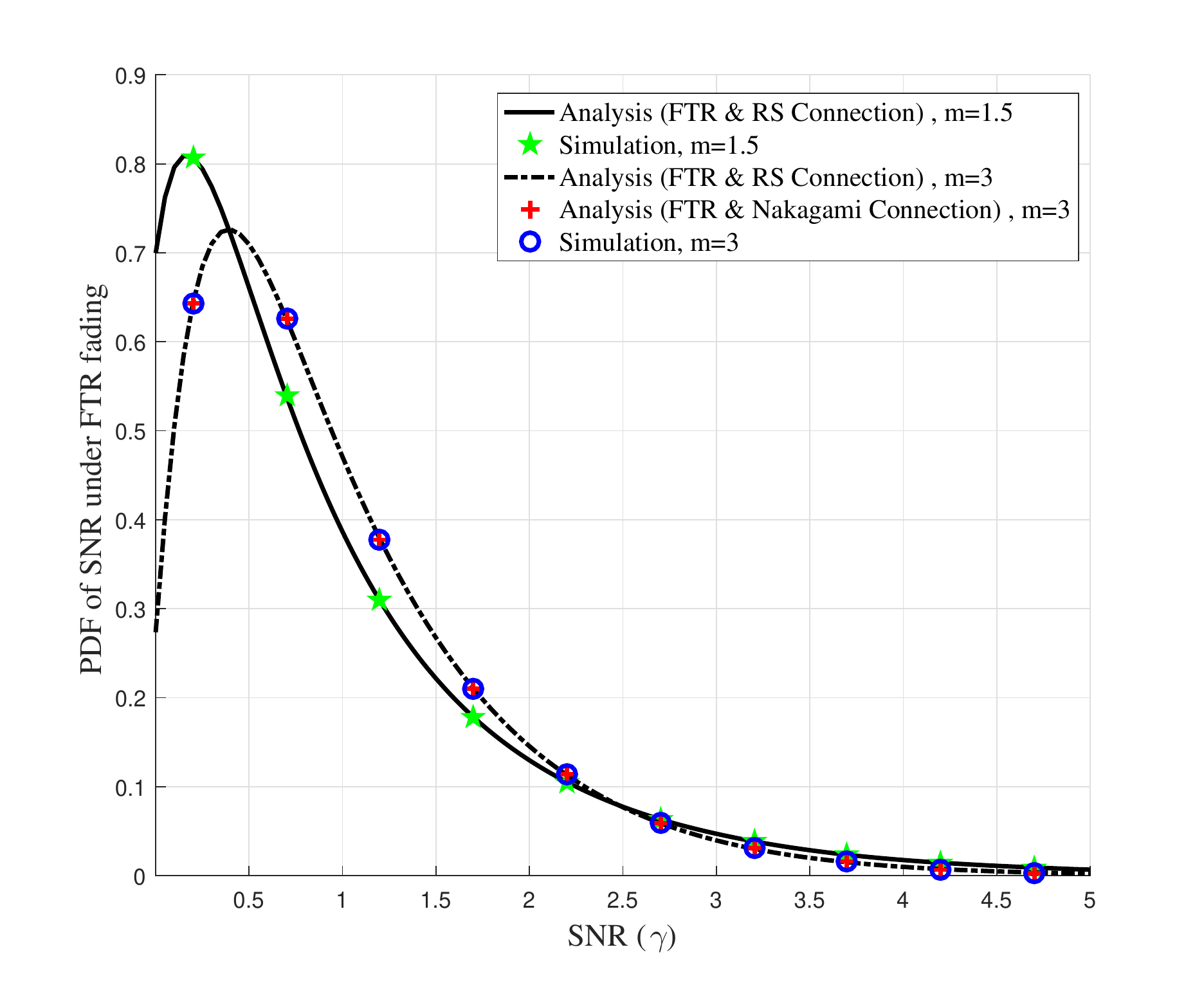}
	\caption{ PDF of the SNR ($\gamma$) under FTR fading obtained using the FTR-RS connection ($m=1.5,3$) and the
FTR-Nakagami-$m$ connection ($m=3$). Monte-Carlo simulation are also plotted. }
	\label{pdf}
\end{figure}
Figs. \ref{fig4} and \ref{fig5} present the outage probability of a system with CCI and background noise in terms of the 
normalized SINR (dB) as defined in (\ref{nsinr}).
In Fig. \ref{fig4}, the effect of different parameters of the FTR fading (including $m$, $K$, and $\Delta$) on the outage probability is evaluated.
It can be observed that, with similar $K$ and $\Delta$, the outage probability decreases as $m$ increases due to the reduction 
of the channel fluctuations. Also, for the given values of $m$ and $\Delta$, the outage probability declines by increasing $K$ 
from 10 to 15, which gives a measurement of the strength of the specular components to the diffuse component. 
Moreover, decreasing $\Delta$ from 0.6 to 0.2  for the same $m$ 
and $K$ provides a lower outage probability, since the reduction of $\Delta$ results in a lower similarity between the specular 
components in the FTR fading model, which therefore are less probable to cancel each other, as they have independent phases.
Fig. \ref{fig5} compares the effect of different values of the SINR threshold ($R_{th} = 6,8,10$). As expected, lower values of the SINR threshold 
yield lower values of the outage probability. Analytical results are verified by simulation.

\begin{figure}[t]
	\centering
	\includegraphics[width=1.05\linewidth]{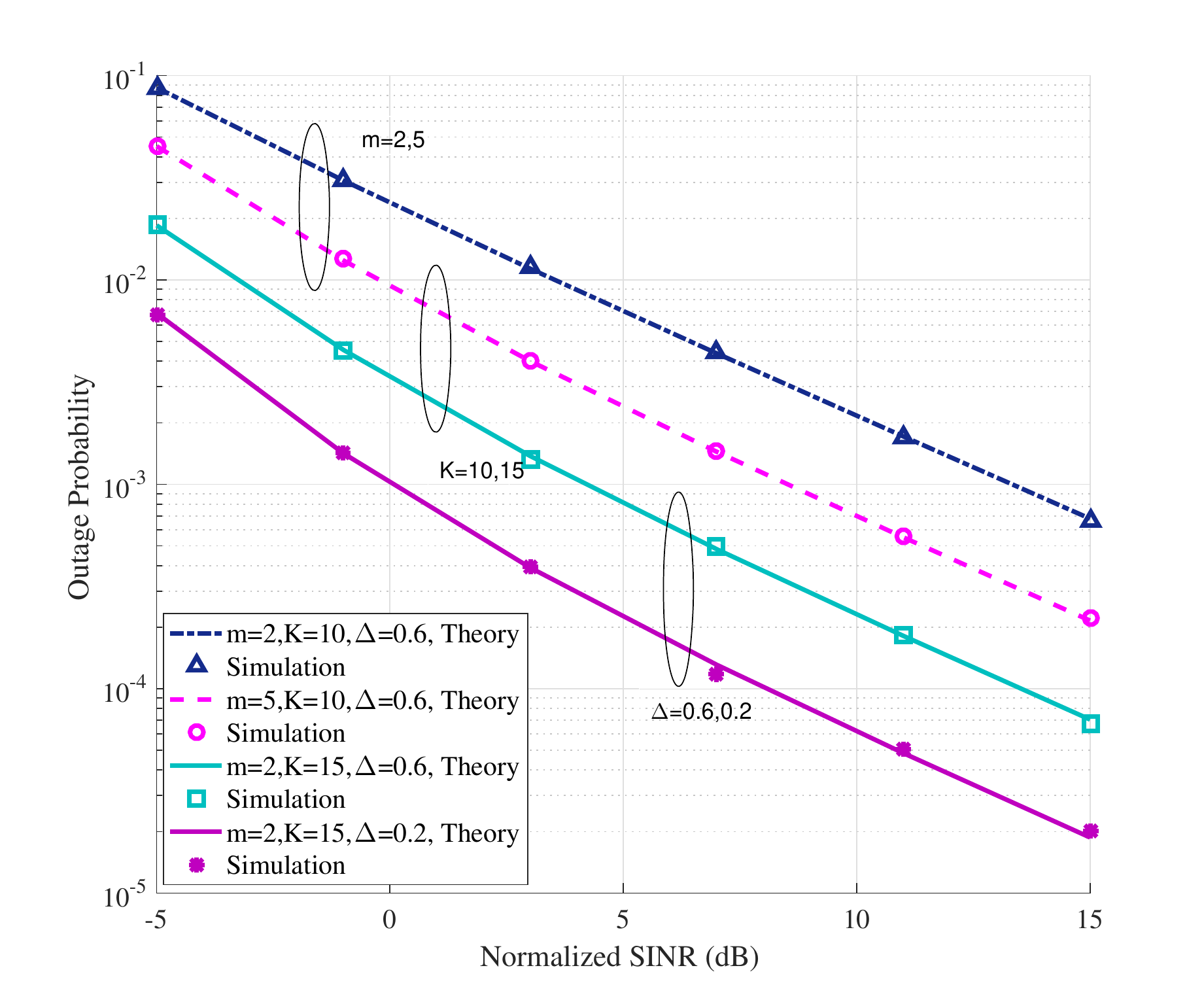}
	\caption{Analytical and simulation results for the outage probability in the presence of CCI and background noise versus 
normalized SINR (dB) for different $m$, $K$ and $\Delta$ with $L=2$, $P_I=0.01$  and $R_{th} =1$.}
	\label{fig4}
\end{figure}
\begin{figure}[t]
	\centering
	\includegraphics[width=1.05\linewidth]{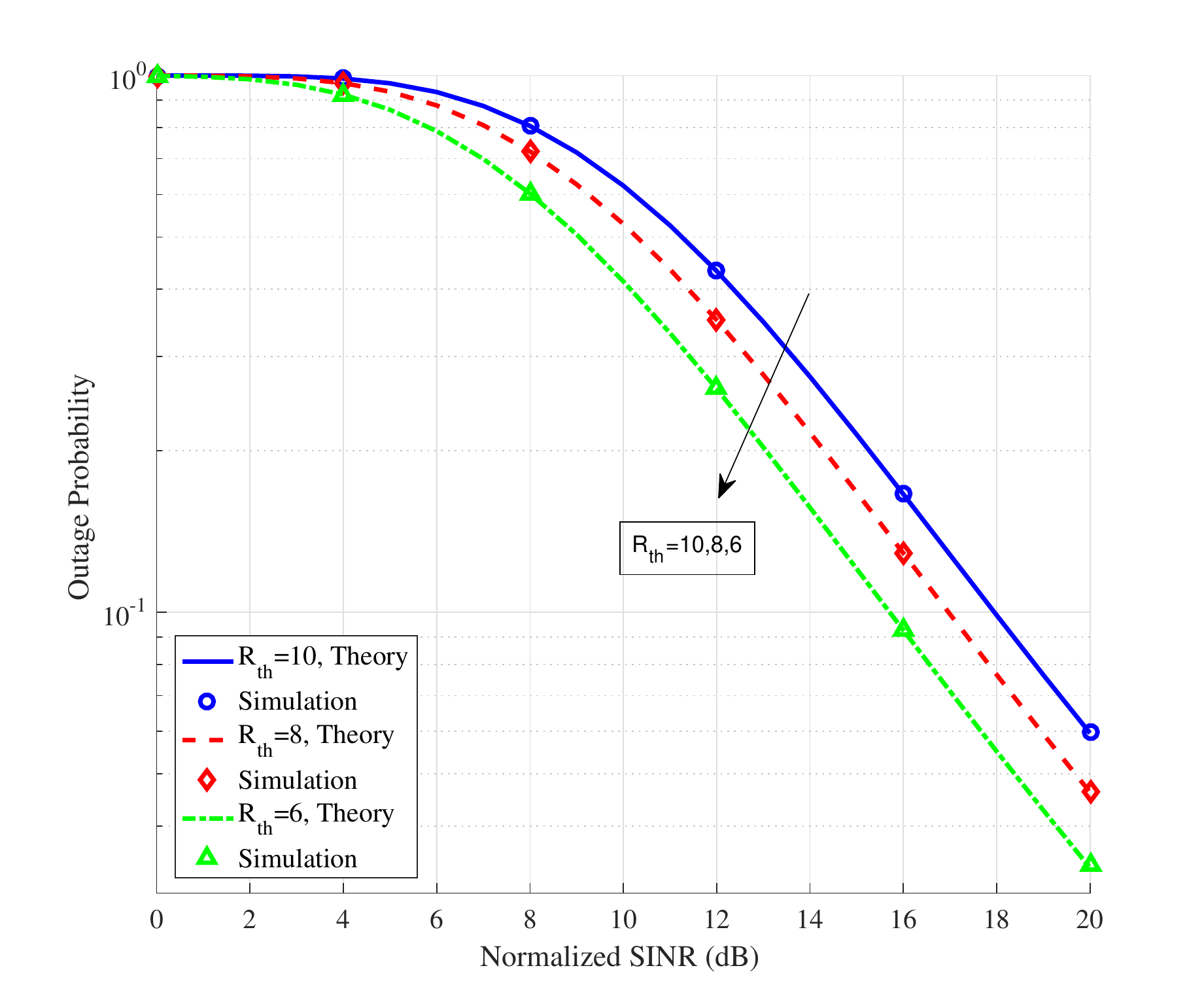}
	\caption{Analytical and simulation results for the outage probability in the presence of CCI and background noise versus 
normalized SINR (dB) for different values of the SINR threshold ($R_{th}$) with $m=2$, $\Delta = 0.6$ and $K=10$.}
	
	\label{fig5}
\end{figure}

Figs. \ref{fig2} and \ref{fig3} show the performance of the outage probability in a noise-limited multi-antenna system performing MRC.
Fig. \ref{fig2} provides the performance of the outage probability for different values of $m$ $(0.5,1,1.5,2.5)$. 
It can be observed that parameter $m$ has a significant impact on performance, in particular for lower values
of this parameter. For instance, for a given SIR threshold $\hat R_{th} = 0$ dB, the outage probability is
0.01 for $m=1$, while for $m=0.5$ the outage probability becomes 0.003. 
The analytical result, which are calculated using (\ref{CCI1}), are verified by Monte Carlo simulations.
Fig. \ref{fig3} shows  analytical results for the outage probability considering different numbers of receive antennas, 
$N=1,2,4$, and interferers, $L=1,2$. By exploiting the MRC technique at the multi-antenna receiver, 
it is shown that, as expected, increasing the number of antennas reduces the outage probability and improves the system performance in the presence of 
interference.

\begin{figure}[t]
	\centering
	\includegraphics[width=1.05\linewidth]{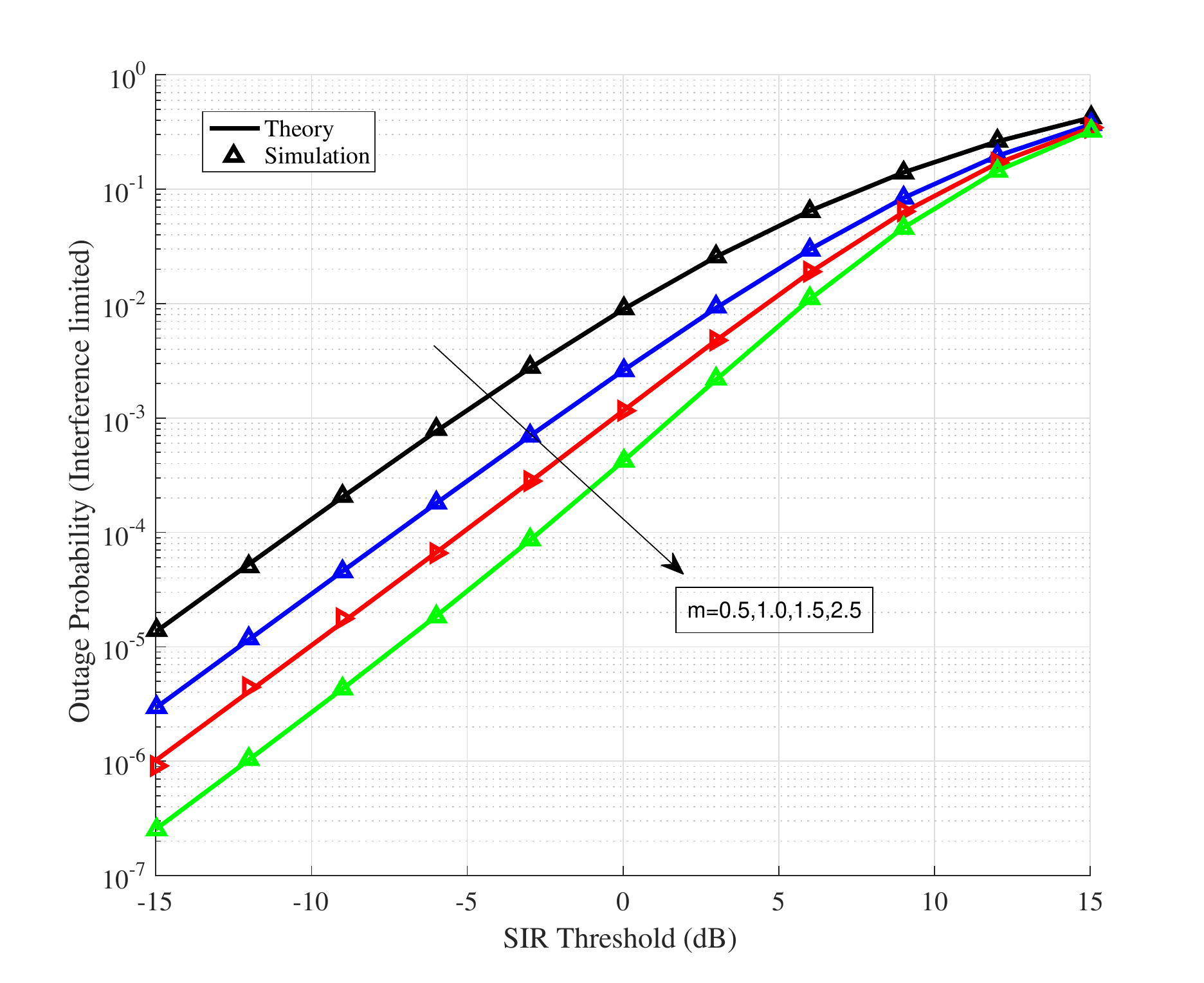}
	\caption{Analytical and simulation results for the outage probability versus SIR threshold (dB) in a interference-limited system for different values of $m$ $(0.5,1,1.5,2.5)$ with  $K=10$, $\Delta = 0.6$, $N = 2$, $P_I =1$ and $L= 1$.}
	\label{fig2}
\end{figure}
\begin{figure}[t]
	\centering
	\includegraphics[width=1.05\linewidth]{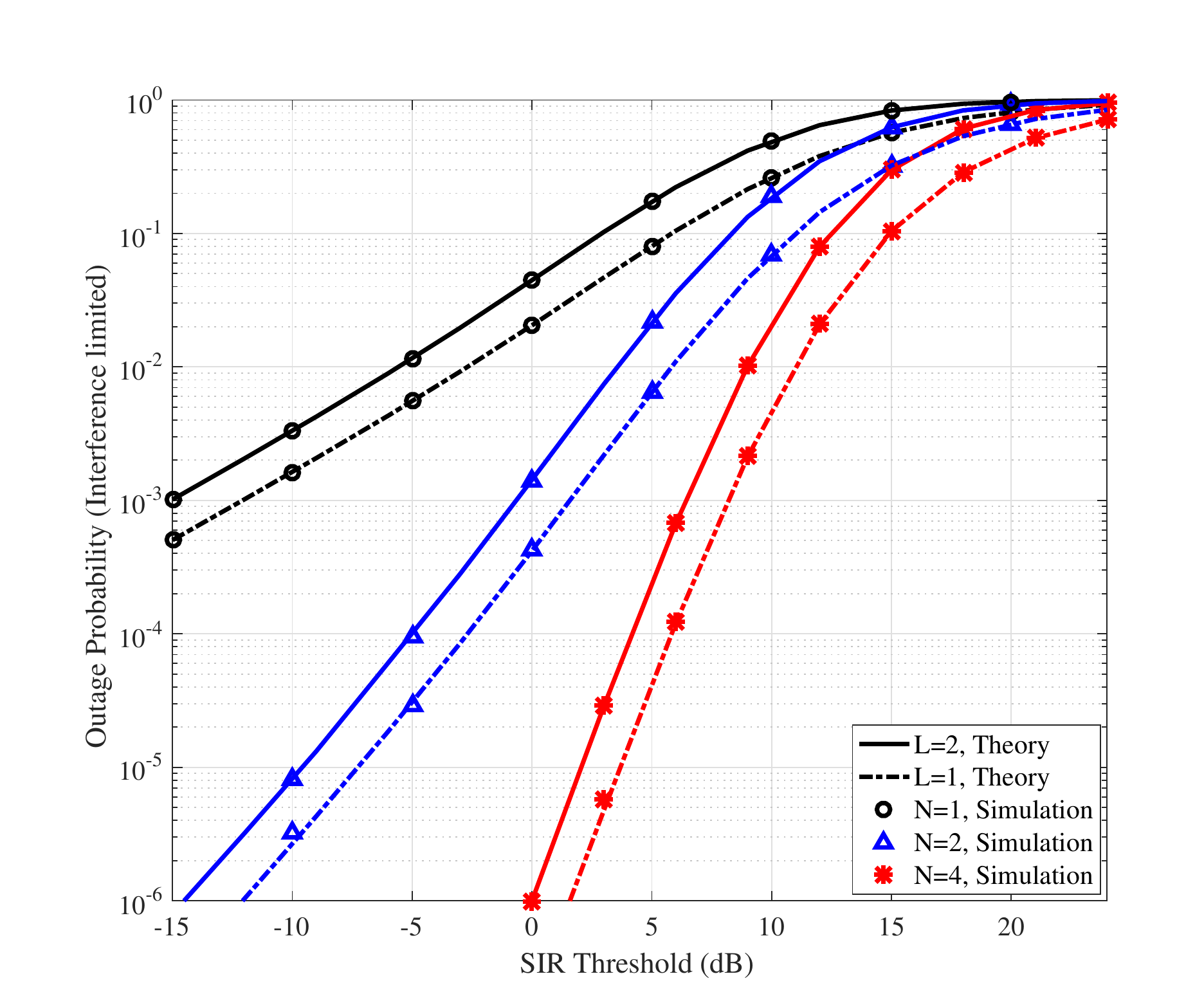}
	\caption{Analytical and simulation results for the outage probability versus SIR threshold (dB) for different numbers of antennas ($N$) and interferers ($L$) with $K=10$, $\Delta = 0.6$, $m = 2.5$, and $P_I=1$.}
	\label{fig3}
\end{figure}

\section{Conclusion}
We presented two flexible connections that describe the relationship between the fluctuating two-ray fading and two other 
fading models: ($i$) Rician Shadowed and ($ii$) Nakagami-$m$. In particular, we provided a relationship that can be leveraged 
to derive performance metrics under FTR fading by using the available performance results under RS fading for any arbitrary $m$, as well 
as utilizing the existing performance results under Nakagami-$m$ fading model for integer $m$. Based on these novel formulations, we provided 
new analytical results for very relevant Laplace-domain statistics which were not previously available for the case of FTR 
fading, finding direct application in a number of scenarios of interest in communication theory.}

%
%


\section*{Appendix 
	\\
	 Solving Integral I1 and I2}
 The integral I1 for an arbitrary non-negative real number $\nu$ and $a > \left| b \right|$ is defined as
 		\begin{align}
 \text{	I1} \triangleq	\frac{{\rm{1}}}{\pi }\int_0^\pi  {\frac{{dx}}{{{{\left( {a + b\cos x} \right)}^{v + 1}}}}} .
  	\end{align}
 	Let us consider the following equality by changing variable $u = c\,t $, as:
 	 		\begin{align}\nonumber
 &	\int_0^\infty  {{t^v}{e^{ - c \cdot t}}dt} 
  	 = \frac{1}{{{c^{v + 1}}}}\int_0^\infty  {{u^v}{e^{ - u}}du}  = \frac{1}{{{c^{v + 1}}}}\Gamma \left( {v + 1} \right).
  	\end{align}
  Therefore,
\begin{align}
 \text{I1} = \frac{{\rm{1}}}{\pi }\int_0^\pi  {\left[ {\frac{1}{{\Gamma \left( {v + 1} \right)}}\int_0^\infty  {{t^v}{e^{ - \left( {a + b\cos x} \right) \cdot t}}dt} } \right]} \,dx.
 	 \end{align}
 	Now, by interchanging the order of integration we can write
 	\begin{align}
 &	\text{I1} = \frac{1}{{\Gamma \left( {v + 1} \right)}}\frac{{\rm{1}}}{\pi }\int_0^\infty  {\left[ {\int_0^\pi  {{t^v}{e^{ - \left( {a + b\cos x} \right)  t}}dx} } \right]} \,dt
 	\\&
 	 = \frac{1}{{\Gamma \left( {v + 1} \right)}}\int_0^\infty  {{t^v}{e^{ - a  t}}\left[ {\frac{{\rm{1}}}{\pi }\int_0^\pi  {{e^{ - b  t\cos x}}dx} } \right]} \,dt\\
 &	= \frac{1}{{\Gamma \left( {v + 1} \right)}}\int_0^\infty  {{t^v}{e^{ - at}}{I_0}\left( {bt} \right)} dt \label{bessel}
 \\&
 = \frac{1}{{{{\left( {\sqrt {{a^2} - {b^2}} } \right)}^{v + 1}}}}{P_v}\left( {\frac{a}{{\sqrt {{a^2} - {b^2}} }}} \right),
 \end{align}
where $P_v$ is the Legendre function and where we have used \cite[p. 196 (8)]{erdel} 
together with the fact that the Bessel function $I_0$ in \eqref{bessel} is an even function.
 
  The integral I2 for positive integer $P1$, arbitrary positive $P2$, arbitrary $\alpha$ and $|\beta|<1$ can be computed as
    \begin{align}
&  \text{I2} \triangleq \int_0^\pi  \frac{\left( {1 + \alpha \cos (\theta )} \right)^{P1}}{\left( {1 + \beta \cos (\theta )} \right)^{P2}}d \theta
  	\\ \nonumber
  &\overset{(A)}{=}
  \frac{1}{(1-\beta)^{P2}}\int_0^1 
  (1-\alpha+{2\alpha}x)^{P1} 	(1+\frac{2\beta}{1-\beta}x)^{-P2}
  \\& \times 
  (-1)x^{-\frac{1}{2}}(1-x)^{-\frac{1}{2}}
  dx
  \\ \nonumber
  &\overset{(B)}{=}
  \frac{-1}{(1-\beta)^{P2}}
  \sum_{q=0}^{P1} \binom{P1}{q} ({2\alpha})^q (1-\alpha)^{P1-q}
  \\& \times
  \int_0^1  
  x^{q-\frac{1}{2}}(1-x)^{-\frac{1}{2}} 	(1+\frac{2\beta}{1-\beta}x)^{-P2}
  dx
   \\ \nonumber
  &\overset{(C)}{=}
  \frac{-1}{(1-\beta)^{P2}}
  \sum_{q=0}^{P1} \binom{P1}{q} 
  ({2\alpha})^q (1-\alpha)^{P1-q}
  \frac{\sqrt{\pi}\Gamma\left(q+\frac{1}{2}\right)}{\Gamma(q+1)}
  \\
  & \times 
  {{\kern 1pt}_2}{F_1}\left(P2,q+\frac{1}{2} ;q + 1;-\frac{2\beta}{1-\beta} \right),
\end{align}
where $(A)$ followed from the change of variables $\cos (\theta )=2x-1$, yielding $d\theta = \frac{-dx}{\sqrt{x}\sqrt{1-x}}$, $(B)$ 
followed by using the binomial theorem $(a+b)^n=\sum_{q=0}^n \binom{n}{q}a^{n-q}b^q$ for positive integer $P1$, and $(C)$ is 
obtained from the integral representation of the Gauss hypergeometric function \cite[eq. (9.111)]{Gradsh}
\begin{align}
&	_2{F_1}\left( {a,b;c;z} \right)  \nonumber\\  
&= \frac{{\Gamma (c)}}{{\Gamma (b)\Gamma (c - b)}}\int_0^1 {{t^{b - 1}}{{\left( {1 - t} \right)}^{c - b - 1}}{{\left( {1 - tz} 
\right)}^{ - a}}dt}  \label{hyper}
\end{align}
where $a=P2$, $b=q+\frac{1}{2}$, $c=q+1$ and $z=-\frac{2\beta}{1-\beta}$. 




\begin{thebibliography}{1}
	
	\bibitem{ftr}
	J. M. Romero-Jerez, F. J. Lopez-Martinez, J. F. Paris, and A. J. Goldsmith, ``The Fluctuating Two-Ray Fading Model: Statistical
	Characterization and Performance Analysis,'' \emph{IEEE Trans. Wireless Commun}., vol. 16, no. 7, pp. 4420-4432, Jul. 2017.
%
	\bibitem{twdp}G. D. Durgin, T. S. Rappaport and D. A. de Wolf, ``New Analytical Models and Probability Density Functions for Fading in Wireless Communications,'' \emph{IEEE Trans. Commun}., vol. 50, no. 6, pp. 1005-1015, Jun. 2002.

\bibitem{h18}
M. K. Samimi, G. R. MacCartney, S. Sun and T. S. Rappaport, "28 GHz Millimeter-Wave Ultrawideband Small-Scale Fading Models in Wireless Channels," \emph{2016 IEEE 83rd Vehicular Technology Conference (VTC Spring)}, 2016, pp. 1-6.

	\bibitem{mine1}	
M. Olyaee, M. Eslami, and J. Haghighat, "Performance of Maximum Ratio Combining of Fluctuating Two-Ray (FTR) mmWave Channels for 5G and Beyond Communications." \emph{Trans. Emerg. Telecommun. Technol.}, vol. 30, no. 10, pp. e3601, 2019.

\bibitem{mine2}	
M. Olyaee, M. Eslami, J. Haghighat and W. Hamouda, "Performance Analysis of Cellular Downlink With Fluctuating Two-Ray Channels Under Inter-Cell Interference," in \emph{IEEE Trans. Veh. Technol.}, vol. 69, no. 11, pp. 13437-13449, Nov. 2020.



\bibitem{DF}
D. Dixit, and P. R. Sahu. "Performance of Multihop Detect‐and‐Forward Relaying System over Fluctuating Two-Ray Fading Channels," \emph{Trans. Emerg. Telecommun. Technol.}, vol. 29, e3423, March 2018.

\bibitem{hadi1}	
H. Hashemi, J. Haghighat, M. Eslami and K. Navaie, "Amplify-and-Forward Relaying With Maximal Ratio Combining Over Fluctuating Two-Ray Channel: Non-Asymptotic and Asymptotic Performance Analysis," in \emph{IEEE Trans. Commun.}, vol. 68, no. 12, pp. 7446-7459, Dec. 2020.


\bibitem{hadi3}	
H. Hashemi, J. Haghighat, and M. Eslami. "Performance Analysis of Relay-aided Millimeter-Wave Communications with Optimal and Suboptimal Combining at Destination," \emph{Phys. Commun.}, vol. 39, pp. 100991, 2020.


\bibitem{maritime}
J. Wang et al., "Wireless Channel Models for Maritime Communications," in \emph{IEEE Access}, vol. 6, pp. 68070-68088, 2018.

\bibitem{sumF}
J. Zheng, J. Zhang, G. Pan, J. Cheng and B. Ai, "Sum of Squared Fluctuating Two-Ray Random Variables With Wireless Applications," in \emph{IEEE Trans. Veh. Technol.}, vol. 68, no. 8, pp. 8173-8177, Aug. 2019.

\bibitem{hadi2}	
H. Hashemi, J. Haghighat, M. Eslami and W. A. Hamouda, "Analysis of Equal Gain Combining Over Fluctuating Two-Ray Channels With Applications to Millimeter-Wave Communications," in \emph{IEEE Trans. Veh. Technol.}, vol. 69, no. 2, pp. 1751-1765, Feb. 2020.

\bibitem{AP1}
H. Zhao, Z. Liu and M. Alouini, "Different Power Adaption Methods on Fluctuating Two-Ray Fading Channels," in \emph{IEEE Wireless Commun. Lett.}, vol. 8, no. 2, pp. 592-595, April 2019.

\bibitem{PHY}
W. Zeng, J. Zhang, S. Chen, K. P. Peppas and B. Ai, "Physical Layer Security Over Fluctuating Two-Ray Fading Channels," in \emph{IEEE Trans. Veh. Technol.}, vol. 67, no. 9, pp. 8949-8953, Sept. 2018.

	\bibitem{ftr2021}
	Q. Cheng, S. Wang, V. Fusco, F. Wang, J. Zhu and C. Gu, "Physical-Layer Security for Frequency Diverse Array-Based Directional Modulation in Fluctuating Two-Ray Fading Channels," in \emph{IEEE Trans. Wirel. Commun.}, 2021.


\bibitem{UAV}
J. Zheng, J. Zhang, S. Chen, H. Zhao,  and B. Ai, "Wireless Powered UAV Relay Communications over Fluctuating Two-Ray Fading Channels," \emph{ Phys. Commun.,} vol. 35, pp. 100724, 2019.

\bibitem{ED}
R. Kumar,  and S.K. Soni,   "Performance Evaluation of ED Based Spectrum Sensing over Fluctuating Two Ray Fading Channel," \emph{ AEU-Int. J. Electron. Commun.,} vol. 118, pp. 153143, 2020.

\bibitem{newR}
J. Zhang, W. Zeng, X. Li, Q. Sun and K. P. Peppas, "New Results on the Fluctuating Two-Ray Model With Arbitrary Fading Parameters and Its Applications," in \emph{IEEE Trans. Veh. Technol.}, vol. 67, no. 3, pp. 2766-2770, March 2018.

\bibitem{correct}
	M. L\'{o}pez-Ben\'{i}tez and J. Zhang, "Comments and Corrections to "New Results on the Fluctuating Two-Ray Model With Arbitrary Fading Parameters and Its Applications," in \emph{IEEE Trans. Veh. Technol.}, vol. 70, no. 2, pp. 1938-1940, Feb. 2021.
	
	\bibitem{moment}
	C. Garcia-Corrales, U. Fernandez-Plazaola, F. J. Ca\~{n}ete, J. F. Paris and F. J. Lopez-Martinez, "Unveiling the Hyper-Rayleigh Regime of the Fluctuating Two-Ray Fading Model," in \emph{IEEE Access}, vol. 7, pp. 75367-75377, 2019.
	
	\bibitem{Matlabprog}
E. Martos-Naya, J. M. Romero-Jerez, F. J. Lopez-Martinez, and J. F. Paris,
“A MATLAB program for the computation of the confluent hypergeometric function $\Phi_2$,” Univ. Malaga, Malaga, Spain, Tech. Rep., 2016. [Online].
Available: http://hdl.handle.net/10630/12068
	
		\bibitem{Abdi2003}
	A. Abdi, W. Lau, M.-S. Alouini, and M. Kaveh, “A New Simple Model for Land Mobile Satellite Channels:  first- and second-order statistics,” in \emph{IEEE Trans. Wirel. Commun.}, vol. 2, no. 3, pp. 519–528, May 2003

	\bibitem{Paris2014}
	J. F. Paris, "Statistical Characterization of $\kappa$-$\mu$  Shadowed Fading," in \emph{IEEE Trans. Veh. Technol.}, vol. 63, no. 2, pp. 518-526, Feb. 2014.
	
		
	\bibitem{simon}
	M. K. Simon, and M. S. Alouini. \emph{Digital communication over fading channels}. vol. 95. John Wiley \& Sons, 2005.
		
		\bibitem{hoyt18}
	F. J. Lopez-Martinez, R. F. Pawula, E. Martos-Naya and J. F. Paris, "A Clarification of the Proper-Integral Form for the Gaussian  $Q$-Function and Some New Results Involving the  $F$-Function," in \emph{IEEE Commun. Lett.}, vol. 18, no. 9, pp. 1495-1498, Sept. 2014.
	
	\bibitem{hoyt19}
	M. K. Simon, "A new twist on the Marcum Q-function and its application," in \emph{IEEE Commun. Lett.}, vol. 2, no. 2, pp. 39-41, Feb. 1998.
	
	\bibitem{hoyt20}
	R. F. Pawula, S. O. Rice, and J. Roberts, "Distribution of the Phase
	Angle Between Two Vectors Perturbed by Gaussian Noise," \emph{IEEE Trans. Commun.}, vol. 30, no. 8, pp. 1828–1841, Aug. 1982.
	
		
		\bibitem{NewF}
	J. M. Romero-Jerez and F. J. Lopez-Martinez, "A New Framework for the Performance Analysis of Wireless Communications Under Hoyt (Nakagami-$q$) Fading," in \emph{IEEE Trans. Inf. Theory}, vol. 63, no. 3, pp. 1693-1702, March 2017.	
	
	\bibitem{mgf}
M. Rao, F. J. Lopez-Martinez, M. Alouini and A. Goldsmith, "MGF Approach to the Analysis of Generalized Two-Ray Fading Models," in \emph{IEEE Trans. Wirel. Commun.}, vol. 14, no. 5, pp. 2548-2561, May 2015.
	
	\bibitem{Romero19}
J. M. Romero-Jerez, F. J. Lopez-Martinez, J. P. Pe\~{n}a-Mart\'{i}n, Ali Abdi "Stochastic Fading Channel Models with Multiple Dominant Specular Components," in \emph{arXiv:1905.03567v1}, 2019.

	
	\bibitem{Gradsh}
	 I. S. Gradshteyn and I. M. Ryzhik, \emph{Table of Integrals, Series and Products}, 7th ed. Academic Press Inc, 2007.
		
	\bibitem{phi2}
 H. M. Srivastava and P. W. Karlsson, \emph{Multiple Gaussian Hypergeometric Series.} Hoboken, NJ, USA: Wiley, 1985.

\bibitem{RSParis}
	J. F. Paris, "Closed-form Expressions for the Rician Shadowed Cumulative Distribution Function," \emph{Electron. Lett.}, vol. 46, no. 13, pp. 952 - 953, June 2010.


	

	\bibitem{beckmann}
J. P. Pe\~{n}a-Mart\'{i}n, J.M. Romero-Jerez, F. Javier Lopez-Martinez, "Generalized MGF of Beckmann Fading With Applications to Wireless Communications Performance Analysis," in \emph{IEEE Trans. Commun.}, vol. 65, no. 9, pp. 3933-3943, Sept. 2017.

\bibitem{Pablo} P. Ramirez-Espinosa and F. Javier Lopez-Martinez, "Composite Fading Models based on Inverse Gamma Shadowing: Theory and Validation,"  to appear in \emph{IEEE Trans.Wireless Commun.}, 2021, doi: 10.1109/TWC.2021.3065141.

			\bibitem{IMGF}
	F. J. Lopez-Martinez, J. M. Romero-Jerez and J. F. Paris, "On the Calculation of the Incomplete MGF With Applications to Wireless Communications," in \emph{IEEE Trans. Commun.}, vol. 65, no. 1, pp. 458-469, Jan. 2017.
	

	
	
		
		\bibitem{kappamuinteger}
F. J. Lopez-Martinez, J. F. Paris and J. M. Romero-Jerez, "The $\kappa$-$\mu$ Shadowed Fading Model With Integer Fading Parameters," in \emph{IEEE Trans. Veh. Technol.}, vol. 66, no. 9, pp. 7653-7662, Sept. 2017.
	

		\bibitem{CCI}
	J. M. Romero-jerez and A. J. Goldsmith, "Receive Antenna Array Strategies in Fading and Interference: An Outage Probability Comparison," in \emph{ IEEE Trans. Wirel. Commun.}, vol. 7, no. 3, pp. 920-932, March 2008.
		
		
		
	
	
	

	
	
	
	
	
	

	
	
	


\bibitem{erdel}
A. Erdelyi, W. Magnus, F. Oberhettinger,  and F.G. Tricomi, \emph{Tables of Integral Transforms}, vol. 2. McGraw-Hill Book Company, Incorporated, 1954.




\newpage
\end{thebibliography}
\end{document}